\newtheorem{theorem}{Theorem}
\newtheorem{lemma}{Lemma}
\newtheorem{definition}{Definition}
\begin{document}

% paper title
\title{ Low ML-Decoding Complexity, Large Coding Gain, Full-Rate, Full-Diversity STBCs for  $2\times2$ and $4\times2$ MIMO Systems}

% author names and affiliations
% use a multiple column layout for up to three different
% affiliations
\author{
\authorblockN{K. Pavan Srinath and B. Sundar Rajan\\
Dept of ECE, Indian Institute of science \\
Bangalore 560012, India\\
Email:\{pavan,bsrajan\}@ece.iisc.ernet.in\\
}
}
% make the title area
\maketitle
\begin{abstract}
This paper\footnote[1]{Part of the content of this manuscript has been accepted for presentation in IEEE Globecom 2008, to be held in New Orleans} deals with low maximum likelihood (ML) decoding complexity, full-rate and full-diversity space-time block codes (STBCs), which also offer large coding gain, for the  2 transmit antenna, 2 receive antenna ($2\times 2$) and the 4 transmit antenna, 2 receive antenna ($4\times 2$) MIMO systems. Presently, the best known STBC for the $2\times2$ system is the Golden code and that for the $4\times2$ system is the DjABBA code. Following the approach by Biglieri, Hong and Viterbo, a new STBC is presented in this paper for the $2\times 2$ system. This code matches the Golden code in performance and ML-decoding complexity for square QAM constellations while it has lower ML-decoding complexity with the same performance for non-rectangular QAM constellations. This code is also shown to be \emph{information-lossless} and \emph{diversity-multiplexing gain} (DMG) tradeoff optimal. This design procedure is then extended to the $4\times 2$ system and a code, which outperforms the DjABBA code for QAM constellations with lower ML-decoding complexity, is presented. So far, the Golden code has been reported to have an ML-decoding complexity of the order of $M^4$ for square QAM of size $M$. In this paper, a scheme that reduces its ML-decoding complexity to $M^2\sqrt{M}$ is presented.
\end{abstract}

\section{Introduction And Background}
Multiple-input multiple-output (MIMO) transmission has attracted a lot of interest in the last decade, chiefly because of the enhanced capacity it provides compared with that provided by the single-input, single-output (SISO) system. The Alamouti code \cite{SMA} for two transmit antennas, due to its orthogonality property, allows a low complexity ML-decoder. This scheme led to the development of the generalized orthogonal designs \cite{TJC}. These designs are famous for the simplified ML-decoding that they provide. They allow all the symbols to be decoupled from one another and hence, are said to be single-symbol decodable. Another bright aspect about these codes is that they have full transmit diversity for arbitrary complex constellation. However, the limiting factor of these designs is the low code rate (refer Section \ref{sysmod} for a definition of code rate) that they support.

At the other extreme are the well known codes from division algebra, first introduced in \cite{SRS}. The well known perfect codes \cite{ORBV} have also been evolved from division algebra with large coding gains. These codes have full transmit diversity  and have the advantage of a very high symbol rate, equal to that of the VBLAST scheme, which, incidentally doesn't have full transmit diversity. But unfortunately, the codes from division algebra including perfect codes have a very high ML-decoding complexity (refer Section \ref{sysmod} for a definition of ML-decoding complexity), making their use prohibitive in practice.

The class of single-symbol decodable codes also includes the codes constructed using co-ordinate interleaving, called co-ordinate interleaved orthogonal designs (CIODs) \cite{ZS}, and the Clifford-Unitary Weight single-symbol decodable designs (CUW-SSD) \cite{Sanjay}. These designs allow a symbol rate higher than that of the orthogonal designs, although not as much as that provided by the codes from division algebra. The disadvantage with these codes when compared with the Orthogonal designs is that they have full transmit diversity for only specific complex constellations.

The Golden code \cite{BRV}, developed from division algebra, is a full-rate (see Section \ref{sysmod} for the definition of full-rate), full-diversity $2\times2$ code for integer lattice constellations, but has been known to have a high ML-decoding complexity, of the order of $M^4$, where $M$ is the size of the constellation used (it is shown in Section \ref{mlexist} that this can be reduced significantly to $M^2\sqrt{M}$ when the constellation employed is a square QAM). It has to be mentioned that the codes presented in \cite{DV} and \cite{JJK} also have the same coding gain and ML-decoding complexity as the Golden code does. With a view of reducing the ML-decoding complexity, two new full-rate, full-diversity codes for QAM constellations have been proposed for the $2\times2$ MIMO system. The first code was independently discovered by Hottinen, Tirkkonen and Wichman \cite{HTW} and by Paredes, Gershman and Alkhansari  \cite{PGA}, which we call the HTW-PGA code and the second, which we call the Sezginer-Sari code, was reported in \cite{SS} by Sezginer and Sari. Both these codes enable simplified ML-decoding (see Section \ref{sysmod} for a definition of simplified ML-decoding), achieving a complexity of the order of $M^3$ in general, and $M^2$ for square QAM (shown in Section \ref{mlexist}). These codes have a slightly lower coding gain than the Golden code and hence show a slight loss in performance compared to the Golden code. These codes sacrifice the coding gain for simplified ML-decoding complexity.

For 4 transmit antennas, the popular codes are the quasi-orthogonal designs, first introduced in  \cite{JH} and the CIOD for 4 transmit antenna \cite{ZS}, both of which are rate one codes. The CIOD is known to be single symbol decodable and the MDC-QOD \cite{chau} is also single symbol decodable. But when 2 or more receive antennas are employed, these codes cannot be considered to be full-rate. The perfect code for 4 transmit antennas has a high rate of 4 complex symbols per channel use but its use in practice is hampered by its high decoding complexity, even with the use of sphere decoding \cite{ViB}, \cite{DCB}. For a $4 \times 2$ MIMO system, the best performing code has been the DjABBA code \cite{HTW}, which beats even the punctured perfect code for 4 transmit antennas in performance \cite{BHV,HHVMM}. This code was designed for performance alone and  has a high ML-decoding complexity, of the order of $M^7$, as shown in Section \ref{mlexist}. The first attempt at reducing the ML-decoding complexity for a $4\times2$ system while maintaining full-rate was made by Biglieri, Hong and Viterbo \cite{BHV}. The full-rate code that they have proposed, which we call the BHV code, has an ML-decoding complexity of the order of $M^6$ for general constellations, (though this has been reported to be $M^7$ in \cite{BHV}), but doesn't have full-diversity. However, the code matches the DjABBA code in the low SNR scenario and betters the punctured perfect code in codeword error performance (CER).

The contributions of this paper are as follows
 \begin{itemize}
 \item we propose a new full-rate, full-diversity STBC for the $2\times2$ MIMO system. This code has an ML-decoding complexity of the order of $M^3$ in general, as compared to $M^4$ for the Golden code. For square QAM, the ML-decoding complexity of our code is of the order of $M^2\sqrt{2}$, the same as that of the Golden code.
 \item Our code also matches the Golden code in coding gain for QAM constellations and is shown to have the non-vanishing determinant (NVD) property for QAM constellations and hence, is DMG optimal. We also show that our code is \emph {information-lossless}.
 \item We propose a new full-rate, full-diversity STBC for $4\times2$ MIMO systems, having ML-decoding complexity of the order of $M^5$ for arbitrary complex constellations, and of the order of $M^4\sqrt{M}$ for square QAM constellations, whereas the corresponding complexity for the DjABBA code are $M^7$ and $M^6$ respectively. It also has a higher coding gain than the DjABBA code for 4- and 16-QAM constellations and hence, a better CER performance.
 \item We state the conditions that allow simplified ML-decoding and show that for square QAM constellations, the ML-decoding complexity of the Golden code can be reduced to $ M^2 \sqrt{M}.$
 \end{itemize}

The remaining content of the paper is organized as follows : In Section \ref{sysmod}, we give the system model and the code design criteria. In Section \ref{2by2}, we present our code for the $2\times2$ MIMO system and show that it is information-lossless. In Section \ref{nvd}, we show that our code has the NVD property and DMG optimality. In Section \ref{4by2}, we present our code for the $4\times2$ MIMO system. Section \ref{lowdec} deals with the low complexity ML-decoding of these codes. In Section \ref{mlexist}, we analyze the ML-decoding complexity for the Golden code, the HTW-PGA code, the DjABBA code and the BHV code. The simulations results  constitute Section \ref{dis}. Concluding remarks are made in Section \ref{conc}.

\textit{\textbf{Notations}:} Throughout, bold, lowercase letters are used to denote vectors and bold, uppercase letters are used to denote matrices. Let $\textbf{X}$ be a complex matrix. Then $\textbf{X}^T$, $\textbf{X}^{H}$ and $det\left[\textbf{X}\right]$ denote the transpose, Hermitian and determinant of $\textbf{X}$, respectively. For a complex variable $x,$ $x_I$ and $x_Q$ denote the real and imaginary part of $x,$ respectively. Also, $j$ represents $\sqrt{-1}$ and the sets of all integers, all real and complex numbers are denoted by $\mathbb{Z},$  $\mathbb{R}$ and $\mathbb{C},$ respectively. The Frobenius norm and the trace operations are denoted by $\Vert.\Vert_F$ and $tr\left[.\right] $ respectively. The operation of stacking the columns of $\textbf{X}$ one below the other is denoted by $vec(\textbf{X}).$ The Kronecker product is denoted by $\otimes$, $\textbf{I}_{T}$ and $\textbf{O}_T$ denote the $T\times T$ identity matrix and the null matrix, respectively. The inner product of two vectors $\textbf{x}$ and $\textbf{y}$ is denoted by $\langle \textbf{x},\textbf{y} \rangle$.  For a complex random variable $X$, $X \sim \mathcal{N}_{\mathbb{C}}(0,N)$ denotes that $X$ has a complex normal distribution with mean $0$ and variance $N$. For any real number $m$, \emph{rnd}[$m$] denotes the operation that rounds off $m$ to the nearest integer, i.e.,
\begin{equation*}
\emph{rnd}[m] = \left\{ \begin{array}{rr}
\lfloor m \rfloor  & \textrm{if} ~ \lceil m \rceil - m > m - \lfloor m \rfloor\\
\lceil m \rceil  & \textrm{otherwise}\\
\end{array} \right.
\end{equation*}
For a complex variable $x$, the $\check{(.)}$ operator acting on $x$ is defined as follows
\begin{equation*}
\check{x} \triangleq \left[ \begin{array}{rr}
                             x_I & -x_Q \\
                             x_Q & x_I \\
                            \end{array}\right]
\end{equation*}
The $\check{(.)}$ can similarly be applied to any matrix $\textbf{X} \in \mathbb{C}^{n \times m}$ by replacing each entry $x_{ij}$ by $\check{x}_{ij}$, $i=1,2,\cdots, n, j = 1,2,\cdots,m$ , resulting in a matrix denoted by $\check{\textbf{X}} \in \mathbb{R}^{2n \times 2m}$.

Given a complex vector $\textbf{x} = [ x_1, x_2, \cdots, x_n ]^T$, $\tilde{\textbf{x}}$ is defined as
\begin{equation*}
\tilde{\textbf{x}} \triangleq [ x_{1I},x_{1Q}, \cdots, x_{nI}, x_{nQ} ]^T.
\end{equation*}
\noindent and
$\tilde{\textbf{x}}^\prime$ is defined as
\begin{equation*}
\tilde{\textbf{x}}^\prime \triangleq [ -x_{1Q},x_{1I}, , \cdots, -x_{nQ},x_{nI}]^T.
\end{equation*}
It follows that $\check{\textbf{x}} = [\tilde{\textbf{x}} \ \tilde{\textbf{x}}^\prime]$.

\section{System Model}
\label{sysmod}
We consider Rayleigh quasi-static flat-fading MIMO channel with full channel state information (CSI) at the receiver but not at the transmitter. For $n_t\times n_r$ MIMO transmission, we have
\begin{equation}\label{Y}
\textbf{Y = HS + N}
\end{equation}

\noindent where $\textbf{S} \in \mathbb{C}^{n_t\times T}$ is the codeword matrix, transmitted over T channel uses, $\textbf{N} \in \mathbb{C}^{n_r \times T}$ is a complex white Gaussian noise matrix with i.i.d entries $\sim
\mathcal{N}_{\mathbb{C}}\left(0,N_{0}\right)$ and $\textbf{H} \in \mathbb{C}^{n_r\times n_t}$ is the channel matrix with the entries assumed to be i.i.d circularly symmetric Gaussian random variables $\sim \mathcal{N}_\mathbb{C}\left(0,1\right)$. $\textbf{Y} \in \mathbb{C}^{n_r \times T}$ is the received matrix.

\begin{definition}\label{def1}$\left(\textbf{Code rate}\right)$ If there are $k$ independent complex information symbols in the codeword which are transmitted over $T$ channel uses, then, the code rate is defined to be $k/T$  complex symbols per channel use. For instance, for the Alamouti code, $k=2$ and $T=2.$ So, its code rate is 1 complex symbol per channel use.
\end{definition}

\begin{definition}\label{def2}$\textbf{(Full-rate code)}$. An STBC is said to be \emph{full-rate} if it transmits at the rate of $n_{min}$ complex symbols per channel use, where $n_{min} = min\left(n_t,n_r\right)$.
\end{definition}

So, the Alamouti code can be considered to be full-rate for $2\times1$ MIMO alone, while the Golden code is full-rate for $n_r \geq 2$.

Considering ML-decoding, the decoding metric that is to be minimized over all possible values of codewords $\textbf{S}$ is given by
 \begin{equation}
\label{ML}
 \textbf{M}\left(\textbf{S}\right) = \Vert \textbf{Y} - \textbf{HS} \Vert_F^2
 \end{equation}

\begin{definition}\label{def3}$\left(\textbf{Decoding complexity}\right)$
The ML decoding complexity is a  measure of the maximum number of symbols that need to be jointly decoded in minimizing the ML decoding metric. This number can be $k$ in the worst scenario, $k$ being the total number of information symbols in the code. Such a code is said to have a high ML-decoding complexity, of the order of $M^k$, where $M$ is the size of the signal constellation. If the code has an ML-decoding complexity of order less than $M^k$, the code is said to admit \emph{\textbf{simplified ML-decoding}}. For some codes, all the symbols can be independently decoded. Such codes are said to be \emph{single-symbol decodable}.
\end{definition}
\begin{definition}\label{def4}$\left(\textbf{Generator matrix}\right)$ For any STBC $\textbf{S}$ that encodes $k$ information symbols, the \textbf{\emph{generator}} matrix $\textbf{G}$ is defined by the following equation  \cite{BHV}
\begin{equation}
\widetilde{vec\left(\textbf{S}\right)} = \textbf{G} \tilde{\textbf{s}}.
\end{equation}
\noindent where $\textbf{s} \triangleq \left[ s_1, s_2,\cdots,s_k \right]^T$ is the information symbol vector.
\end{definition}

An STBC can be expressed in terms of its \textbf{\emph{weight matrices}} (linear dispersion matrices) as follows
\begin{equation}
\textbf{S} = \sum_{i=1}^{k}\textbf{A}_{2i-1}s_{iI}+\textbf{A}_{2i}s_{iQ}
\end{equation}
Here, $\textbf{A}_i,i=1,2,\cdots,2k$ are the weight matrices for $\textbf{S}$. It follows that
\begin{equation}
\textbf{G} = [\widetilde{vec(\textbf{A}_1)}\ \widetilde{vec(\textbf{A}_2)}\ \cdots \ \widetilde{vec(\textbf{A}_{2k})}]
\end{equation}

It is well known  \cite{TSC}, that an analysis of the PEP leads to the following design criteria:
\begin{enumerate}
\item \emph{\textbf{Rank criterion}}: To achieve maximum diversity, the codeword difference matrix $(\textbf{S} - \hat{\textbf{S}})$ must have full-rank for all possible codeword pairs and the diversity gain is $n_tn_r$. If full-rank is not achievable, then, the diversity gain is given by $rn_r$, where $r$ is the minimum rank of the codeword difference matrix over all possible codeword pairs.

\item \emph{\textbf{Determinant criterion }}: For a full ranked STBC, the minimum determinant $\delta_{min}$, defined as
\begin{equation}
 \delta_{min} \triangleq \min_{\textbf{S} \neq \hat{\textbf{S}}} det\left[\left(\textbf{S}-\hat{\textbf{S}}\right)\left(\textbf{S}-\hat{\textbf{S}}\right)^{H}\right]
\end{equation}
should be maximized. The coding gain is given by $\left(\delta_{min}\right)^{1/n_t}$, with $n_t$ being the number of transmit antennas.
\end{enumerate}

If the STBC is non full-diversity and $r$ is the minimum rank of the codeword difference matrix over all possible codeword pairs, then, the coding gain $\delta$ is given by
\begin{equation*}
\delta = \min_{\textbf{S} -\hat{\textbf{S}}}\left(\prod_{i=1}^{r}\lambda_i\right)^{\frac{1}{r}}
\end{equation*}
\noindent where  $\lambda_i, i = 1,2,\cdots,r$, are the non-zero eigen values of the matrix $\left(\textbf{S} - \hat{\textbf{S}}\right)\left(\textbf{S} - \hat{\textbf{S}}\right)^H$.
It should be noted that for high signal-to-noise ratio (SNR) values at each receive antenna, the dominant parameter is the diversity gain which defines the slope  of the CER curve. This implies that it is important to first ensure full-diversity of the STBC and then try to maximize the coding gain.

\section{ The Proposed STBC For $2\times2$ MIMO And Information-Losslessness}
\label{2by2}
In this section, we present our STBC \cite{Pav1}, \cite{Pav2} for  $2\times 2$ MIMO system. The design is based on the CIODs, which were studied in \cite{ZS} in connection with a general class of single-symbol decodable codes which includes complex orthogonal designs as a proper subclass. Specifically, for 2 transmit antennas, the CIOD is as follows.
\begin{definition}
The CIOD for $2$ transmit antennas \cite{ZS} is \\
\begin{equation}
\label{ciod}
\textbf{X}(s_1,s_2) = \left[\begin{array}{cc}
            s_{1I}+js_{2Q} & 0 \\
            0  & s_{2I}+js_{1Q}\\
           \end{array}\right]
\end{equation}
where $s_i \in \mathbb{C}, i = 1,2$ are the information symbols and $s_{iI}$ and $s_{iQ}$ are the in-phase (real) and quadrature-phase (imaginary) components of $s_i,$ respectively. Notice that in order to make the above STBC full-rank, the signal constellation $\mathcal{A}$ from which the symbols $s_i$ are chosen should be such that the real part (imaginary part, resp.) of any signal point in $\mathcal{A}$ is not equal to the real part (imaginary part, resp.) of any other signal point in $\mathcal{A}$ \cite{ZS}. So if QAM constellations are chosen, they have to be rotated. The optimum angle of rotation has been found in \cite{ZS} to be $\frac{1}{2} tan^{-1}2$ radians and this maximizes the diversity and coding gain. We denote this angle by $\theta_g.$
\end{definition}

The proposed  $2 \times 2$ STBC $\mathbf{S}$ is given by
\begin{equation}\label{2code}
 \textbf{S}(x_1,x_2,x_3,x_4) = \textbf{X}\left(s_1,s_2\right) + e^{j\theta}\textbf{X}\left(s_3,s_4\right)\textbf{P}
\end{equation}
where
\begin{itemize}
\item The four symbols $s_1,s_2,s_3$ and $s_4 \in \mathcal{A}$, where $\mathcal{A}$ is a $\theta_g$ radians rotated version of an integer QAM signal set, denoted by $\mathcal{A}_q$, which is a finite subset of the integer lattice, and $x_1,x_2,x_3,x_4 \in \mathcal{A}_q$, i.e, $s_i=e^{j\theta_g}x_i,~~~i=1,2,3,4.$

\item $\textbf{P}$ is a permutation matrix designed to make the STBC full-rate and is given by $
 \textbf{P} = \left[\begin{array}{cc}
0 & 1 \\
1 & 0 \\
\end{array}\right].$
\item The choice of $\theta$ in the above expression should be such that the diversity and coding gain are maximized. We choose $\theta$ to be $\pi/4$ and show in the next section that this angle maximizes the coding gain.
\end{itemize}

Explicitly, our code matrix is

\begin{equation}
\label{2by2code}
 \textbf{S}(x_1,x_2,x_3,x_4)  =  \left[\begin{array}{rr}
 s_{1I}+js_{2Q} & e^{j\pi/4}(s_{3I}+js_{4Q})  \\
 e^{j\pi/4}(s_{4I}+js_{3Q}) & s_{2I}+js_{1Q}
\end{array}\right] \\
\end{equation}
\noindent with  $s_{iI} = cos (\theta_g) x_{iI} - sin (\theta_g)  x_{iQ}$ and $s_{iQ} = sin (\theta_g) x_{iI} + cos (\theta_g)  x_{iQ}$, $i = 1,2,3,4$.

The minimum determinant for our code when the symbols are chosen from the regular QAM constellations (one in which the difference between any two signal points is a multiple of 2) is $3.2$, the same as that for the Golden code, which is proved in the next section.
The generator matrix for our STBC (as defined in Definition \ref{def4}), corresponding to the information vector consisting of symbols $x_i$, is as follows:
%{\footnotesize
\begin{equation}\label{gen}
G = \left[\begin{array}{cccccccc}
      cos(\theta_g) & -sin(\theta_g) & 0 & 0 & 0 & 0 & 0 & 0 \\
      0 & 0 & sin(\theta_g) & cos(\theta_g) & 0 & 0 & 0 & 0 \\
      0 & 0 & 0 & 0 & -\frac{sin(\theta_g)}{\sqrt{2}} & -\frac{cos(\theta_g)}{\sqrt{2}} & \frac{cos(\theta_g)}{\sqrt{2}} & -\frac{sin(\theta_g)}{\sqrt{2}}\\
      0 & 0 & 0 & 0 & \frac{sin(\theta_g)}{\sqrt{2}} & \frac{cos(\theta_g)}{\sqrt{2}} & \frac{cos(\theta_g)}{\sqrt{2}} & -\frac{sin(\theta_g)}{\sqrt{2}}\\
      0 & 0 & 0 & 0 & \frac{cos(\theta_g)}{\sqrt{2}} & -\frac{sin(\theta_g)}{\sqrt{2}} & -\frac{sin(\theta_g)}{\sqrt{2}} & -\frac{cos(\theta_g)}{\sqrt{2}}\\
      0 & 0 & 0 & 0 & \frac{cos(\theta_g)}{\sqrt{2}} & -\frac{sin(\theta_g)}{\sqrt{2}} & \frac{sin(\theta_g)}{\sqrt{2}} & \frac{cos(\theta_g)}{\sqrt{2}}\\
      0 & 0 & cos(\theta_g) & -sin(\theta_g) & 0 & 0 & 0 & 0\\
      sin(\theta_g) & cos(\theta_g) & 0 & 0 & 0 & 0 & 0 & 0 \\
      \end{array}\right]
\end{equation}
%}
It is easy to see that this generator matrix is orthonormal. In \cite{JJK}, it was shown that a sufficient condition for an STBC to be \emph{\textbf{information-lossless}} is that its generator matrix should be unitary. Hence, our STBC has the \emph{\textbf{information-losslessness}} property.

\section{NVD Property And DMG Optimality Of The $2\times2$ Code}
\label{nvd}
In this section we show that the proposed code has the NVD property \cite{BRV}, which, in conjunction with full-rateness, means that our code is  DMG tradeoff optimal \cite{PVK}. We also show that the angle $\pi/4$ in \eqref{2code} maximizes the coding gain.

\begin{theorem}\label{nvdproof}
The minimum determinant of the proposed $2\times2$ code, given by \eqref{2by2code},  when the symbols are chosen from $\mathbb{Z}[j]$ is $1/5$.
\end{theorem}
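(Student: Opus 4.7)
The plan is to reduce the minimum determinant to a divisibility question in $\mathbb{Z}[j]$. By linearity of the encoding in $(x_1,x_2,x_3,x_4)$, a codeword difference $\textbf{S}-\hat{\textbf{S}}$ is just $\textbf{S}$ evaluated at $(x_i-\hat x_i)\in\mathbb{Z}[j]$, so it suffices to show $|\det\textbf{S}(x_1,x_2,x_3,x_4)|^2 \geq 1/5$ for every non-zero $(x_1,x_2,x_3,x_4)\in\mathbb{Z}[j]^4$, and to exhibit an instance attaining equality.

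I would first extract a clean expression for $\det(\textbf{S})$. From \eqref{2by2code},
\[
\det(\textbf{S}) \;=\; (s_{1I}+js_{2Q})(s_{2I}+js_{1Q}) \;-\; j\,(s_{3I}+js_{4Q})(s_{4I}+js_{3Q}).
\]
Writing $s_i = e^{j\theta_g}x_i$ with $p=e^{j\theta_g}$, one has $s_{iI}=(px_i+\bar{p}\bar{x}_i)/2$ and $s_{iQ}=(px_i-\bar{p}\bar{x}_i)/(2j)$. Substituting and setting $\alpha=x_1+x_2$, $\beta=x_1-x_2$, $\gamma=x_3+x_4$, $\delta=x_3-x_4$, each diagonal product collapses to the form $(p^2\alpha^2-\bar{p}^2\bar{\beta}^2)/4$ (resp.\ $(p^2\gamma^2-\bar{p}^2\bar{\delta}^2)/4$). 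Since $\tan(2\theta_g)=2$ gives $p^2=(1+2j)/\sqrt{5}$, one arrives at
\[
4\sqrt{5}\,\det(\textbf{S}) \;=\; (2-j)(j\alpha^2+\gamma^2)+(2+j)(j\bar{\beta}^2+\bar{\delta}^2) \;=\; (1+2j)\alpha^2 - (1-2j)\bar{\beta}^2 + (2-j)\gamma^2 + (2+j)\bar{\delta}^2,
\]
which is manifestly an element of $\mathbb{Z}[j]$.

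The heart of the proof is to upgrade this to $4\sqrt{5}\,\det(\textbf{S})\in 4\mathbb{Z}[j]$. The algebraic identities $\alpha^2-\beta^2=4x_1x_2$ and $\gamma^2-\delta^2=4x_3x_4$ allow replacing $\bar{\beta}^2$ by $\bar{\alpha}^2$ and $\bar{\delta}^2$ by $\bar{\gamma}^2$ modulo $4\mathbb{Z}[j]$. A direct computation with $\alpha=a+bj$ then yields $(1+2j)\alpha^2-(1-2j)\bar{\alpha}^2 = 4j(a^2+ab-b^2)$, and with $\gamma=c+dj$, $(2-j)\gamma^2+(2+j)\bar{\gamma}^2 = 4(c^2+cd-d^2)$; both lie in $4\mathbb{Z}[j]$. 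Hence $|4\sqrt{5}\,\det(\textbf{S})|^2$ is an integer multiple of $16$, so as soon as $\det(\textbf{S})\neq 0$ we obtain $|\det(\textbf{S})|^2 \geq 16/80 = 1/5$.

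It remains to rule out $\det(\textbf{S})=0$ for non-zero $\textbf{x}$, which I would do by infinite descent using the splitting $5=(2-j)(2+j)$. Writing $A=j\alpha^2+\gamma^2$ and $B=j\bar{\beta}^2+\bar{\delta}^2$, the vanishing $(2-j)A+(2+j)B=0$ forces $(2+j)\mid A$ and $(2-j)\mid B$. Reducing $A\equiv 0\pmod{(2+j)}$ in $\mathbb{Z}[j]/(2+j)\cong \mathbb{F}_5$, where $j\equiv -2$, gives $\gamma^2\equiv 2\alpha^2$; since $2$ is a non-residue in $\mathbb{F}_5$, both $\alpha$ and $\gamma$ must lie in $(2+j)$. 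A symmetric analysis of $B\equiv 0\pmod{(2-j)}$ forces $\beta,\delta\in(2+j)$ as well. Substituting $\alpha=(2+j)\alpha'$ etc.\ and dividing the resulting equation by $5$ produces an equation of the same shape (with the roles of $(2\pm j)$ swapped); a second application of the same argument forces $\alpha',\beta',\gamma',\delta'\in(2-j)$, so each of $\alpha,\beta,\gamma,\delta$ lies in $(5)$. Iterating, they are divisible by every power of $5$, hence $\textbf{x}=0$. The bound is tight at $x_1=1,\,x_2=x_3=x_4=0$, where $\textbf{S}=\mathrm{diag}(\cos\theta_g,\,j\sin\theta_g)$ and $|\det(\textbf{S})|^2=\tfrac14\sin^2(2\theta_g)=1/5$. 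The main obstacle I anticipate is the mod-$4$ step: the computation is short, but the conjugation and sign bookkeeping is the easiest place to slip.
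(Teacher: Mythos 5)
Your proof is correct, and although it follows the paper's skeleton up to the key identity $4\sqrt{5}\,\det(\textbf{S}) = (1+2j)\alpha^2-(1-2j)\bar{\beta}^2+(2-j)\gamma^2+(2+j)\bar{\delta}^2$ (the paper writes the same quantity as $(1+2j)(A^2-jC^2)-(1-2j)(B^2-jD^2)$ with $A=\alpha$, $B=\bar{\beta}$, $C=\gamma$, $D=\bar{\delta}$), you handle both substantive steps by genuinely different, and in each case more self-contained, arguments. For the lower bound, the paper merely asserts that the minimum modulus of the right-hand side over non-zero inputs ``can be seen to be $4$'' and informally rules out combinations such as $(A,B,C,D)=(1,j,0,0)$; your observation that $\bar{\beta}^2\equiv\bar{\alpha}^2$ and $\bar{\delta}^2\equiv\bar{\gamma}^2 \pmod{4\mathbb{Z}[j]}$ (via $\alpha^2-\beta^2=4x_1x_2$) reduces the whole expression to $4j(a^2+ab-b^2)+4(c^2+cd-d^2)$ modulo $4\mathbb{Z}[j]$, proving outright that $4\sqrt{5}\,\det(\textbf{S})\in 4\mathbb{Z}[j]$, whence any non-zero value has squared modulus at least $16$ and $|\det(\textbf{S})|^2\geq 1/5$; I verified both conjugation computations and they are right. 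For the non-vanishing, the paper outsources the key Diophantine fact --- that $A_1^2-jC_1^2=5(B^2-jD^2)$ forces all quantities to vanish --- to the Dayal--Varanasi reference, whereas your infinite descent through the split primes $2\pm j$ of norm $5$, using that $2$ and $3$ are quadratic non-residues in $\mathbb{F}_5$, is self-contained and checks out: the roles of $(2+j)$ and $(2-j)$ correctly alternate between successive steps, and each pair of steps gains a factor of $5$ in each of $\alpha,\beta,\gamma,\delta$. Your tightness witness $x_1=1$, $x_2=x_3=x_4=0$ gives $|\det(\textbf{S})|^2=\tfrac{1}{4}\sin^2(2\theta_g)=1/5$, consistent with the paper's $(A,B,C,D)=(1,1,0,0)$. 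Two minor points to make explicit in a final write-up: state that $2+j$ and $2-j$ are non-associate primes when you conclude $(2+j)\mid A$ from $(2-j)A=-(2+j)B$, and note that $\bar{\beta},\bar{\delta}\in(2-j)$ is what the residue argument literally gives, which becomes $\beta,\delta\in(2+j)$ only after conjugating.
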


\begin{proof}
 The determinant of the codeword matrix $\textbf{S}$ can be written as
\begin{equation}\label{equ1}
det(\textbf{S}) = (s_{1I}+js_{2Q})(s_{2I}+js_{1Q}) - j[(s_{3I}+js_{4Q})(s_{4I}+js_{3Q})].
\end{equation}
Using $s_{iI} = (s_i+s_i^*)/2$ and $js_{iQ} = (s_i-s_i^*)/2$ in equation \eqref{equ1}, we get,
{\footnotesize
\begin{eqnarray*}
4det(\textbf{S}) & = &(s_1+s_1^*+s_2-s_2^*)(s_2+s_2^*+s_1-s_1^*) -j[(s_3+s_3^*+s_4-s_4^*)(s_4+s_4^*+s_3-s_3^*)]\\
 & = &\big((s_1+s_2)+(s_1-s_2)^*\big)\big((s_1+s_2)-(s_1-s_2)^*\big)-j[\big((s_3+s_4)+(s_3-s_4)^*\big)\big((s_3+s_4)-(s_3-s_4)^*\big)].
\end{eqnarray*}
}
Since $s_i = e^{j\theta_g}x_i, i = 1,2,3,4$, with $s_i \in \mathcal{A}$, $x_i \in \mathcal{A}_q$, a subset of $\mathbb{Z}[j]$,  defining $A \triangleq (x_1+x_2)$, $B \triangleq (x_1-x_2)^*$, $C \triangleq (x_3+x_4)$ and $D \triangleq (x_3-x_4)^*$, with $A,B,C$ and $D \in \mathbb{Z}[j]$, we get
\begin{eqnarray*}
4det(\textbf{S})& = & (e^{j\theta_g}A+e^{-j\theta_g}B)(e^{j\theta_g}A-e^{-j\theta_g}B)-j[(e^{j\theta_g}C+e^{-j\theta_g}D)(e^{j\theta_g}C-e^{-j\theta_g}D)]\\
& = & e^{j2\theta_g}A^2-e^{-j2\theta_g}B^2 - j(e^{j2\theta_g}C^2-e^{-j2\theta_g}D^2).
\end{eqnarray*}
Since $e^{j2\theta_g} = cos(2\theta_g)+jsin(2\theta_g)=(1+2j)/\sqrt{5}$, we get
\begin{equation}\label{mod}
4\sqrt{5}det(\textbf{S}) =  (1+2j)(A^2-jC^2) - (1-2j)(B^2-jD^2).
\end{equation}
For the determinant of $\textbf{S}$ to be 0, we must have
\begin{eqnarray*}
 (1+2j)(A^2-jC^2)& = &(1-2j)(B^2-jD^2) \\
\Rightarrow (1+2j)^2(A^2-jC^2)& = &5(B^2-jD^2).
\end{eqnarray*}
The above can be written as
\begin{equation}\label{ABC}
 A_1^2-jC_1^2 = 5(B^2-jD^2)
\end{equation}
where $A_1 = (1+2j)A,C_1 = (1+2j)C$ and clearly $A_1,C_1 \in \mathbb{Z}[j]$. It has been shown in \cite{DV} that  \eqref{ABC} holds only when $A_1 = B = C_1 = D = 0$, i.e., only when $x_1 = x_2 =x_3 = x_4 =0$. This means that the determinant of the codeword difference matrix is 0 only when the codeword difference matrix is itself the zero matrix. So, for any distinct pair of codewords, the codeword difference matrix is always full-rank for any constellation which is a subset of $\mathbb{Z}[j]$. Also, the minimum value of the modulus of the R.H.S of \eqref{mod} can be seen to be $4$. This occurs for $(A,B,C,D) = (1,1,0,0)$ or $(0,0,1,1)$. The occurrence of any other combination of $A,B,C$ and $D$ that results in a lower value of the modulus of the R.H.S of \eqref{mod} can be ruled out after noting that $x_1,x_2,x_3$ and $x_4$ take only values from $\mathbb{Z}[j]$. For eg. $(A,B,C,D) = (1,j,0,0)$ is one such combination, but it is easy to see mathematically that such a combination cannot occur for $x_i \in \mathbb{Z}[j],i=1,2,3,4$. So, $\vert det(\textbf{S}) \vert \geq 1/\sqrt{5}$, meaning that the minimum determinant for the code is $1/5$.
\end{proof}

In particular, when the constellation chosen is the regular QAM constellation, the difference between any two signal points is a multiple of 2. Hence, for such constellations, $\vert det(\textbf{S-S}^\prime) \vert \geq 4/\sqrt{5}$, where $\textbf{S}$ and $\textbf{S}^\prime$ are distinct codewords. The minimum determinant is consequently 16/5 and hence the proposed code has the NVD property \cite{BRV}. Now, from \cite{PVK}, where it was shown that full-rate codes which satisfy the NVD property achieve the optimal DMG tradeoff, our proposed STBC is DMG tradeoff optimal.

As a byproduct of Theorem \ref{nvdproof}, we arrive at the following lemma.
\begin{lemma}
The choice of $\pi/4$ for $\theta$ in \eqref{2code} maximizes the coding gain of the proposed $2\times2$ code for QAM constellations.
\end{lemma}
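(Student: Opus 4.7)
The plan is to retrace the determinant computation in the proof of Theorem~\ref{nvdproof}, but this time leaving $\theta$ as a free parameter instead of committing to $\theta = \pi/4$ (i.e., $e^{j2\theta}=j$) from the outset. Substituting $s_{iI}=(s_i+s_i^*)/2$ and $js_{iQ}=(s_i-s_i^*)/2$ into
\begin{equation*}
\det(\textbf{S}) = (s_{1I}+js_{2Q})(s_{2I}+js_{1Q}) - e^{j2\theta}(s_{3I}+js_{4Q})(s_{4I}+js_{3Q}),
\end{equation*}
and again defining $A=x_1+x_2$, $B=(x_1-x_2)^*$, $C=x_3+x_4$, $D=(x_3-x_4)^*$, the same manipulation as in Theorem~\ref{nvdproof} will yield
\begin{equation*}
4\sqrt{5}\,\det(\textbf{S}) = \bigl[(1+2j)A^2 - (1-2j)B^2\bigr] - e^{j2\theta}\bigl[(1+2j)C^2 - (1-2j)D^2\bigr].
\end{equation*}
Since coding gain is monotone in $\delta_{\min}=\min|\det(\textbf{S}-\hat{\textbf{S}})|^2$ and the code is linear over $\mathbb{Z}[j]$, maximizing the coding gain in $\theta$ amounts to maximizing, over the choice of $\theta$, the minimum of the modulus of the right-hand side above taken over all non-zero $(A,B,C,D)\in\mathbb{Z}[j]^4$ (for regular QAM these quadruples additionally have entries that are multiples of $2$, which only rescales the bound by a uniform factor of $4$).

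The next step — and the crux of the argument — is to exhibit a single quadruple whose contribution is \emph{independent} of $\theta$, thereby producing a universal upper bound on the attainable minimum modulus. The choice $(A,B,C,D)=(1,1,0,0)$ gives
\begin{equation*}
\bigl|(1+2j) - (1-2j)\bigr| = |4j| = 4,
\end{equation*}
and the value $4$ does not depend on $\theta$ at all. Consequently, for every $\theta$,
\begin{equation*}
\min_{(A,B,C,D)\neq 0} \bigl|4\sqrt{5}\,\det(\textbf{S})\bigr| \;\leq\; 4,
\end{equation*}
so the coding gain is bounded above by the value it attains when this specific quadruple is the worst case.

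Finally, I will invoke Theorem~\ref{nvdproof}: for $\theta=\pi/4$ it is already established that the minimum modulus of $4\sqrt{5}\,\det(\textbf{S})$ over non-zero $(A,B,C,D)$ equals $4$, attained precisely at $(1,1,0,0)$ and $(0,0,1,1)$. Thus the universal upper bound derived above is achieved by $\theta=\pi/4$, which therefore maximizes the coding gain for QAM constellations. The only obstacle I anticipate is a bookkeeping one, namely making sure the $(A,B,C,D)=(1,1,0,0)$ test quadruple actually corresponds to a realizable codeword difference for QAM (it does, via $x_1=x_2$ differing by $1$ with $x_3=x_4$, up to the overall scaling by $2$ for regular QAM); beyond that, no new computation is required, since all the hard work was done in Theorem~\ref{nvdproof}.
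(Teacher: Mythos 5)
Your argument is correct and is essentially the paper's own: the paper observes that the codeword difference matrices with $x_3=x_4=0$ (the CIOD sub-family, i.e., your $C=D=0$ slice) have determinants independent of $\theta$, with minimum modulus attained at $A=B=1$, yielding the same $\theta$-independent upper bound of $16/5$ on the minimum determinant for regular QAM, which Theorem 1 shows is attained at $\theta=\pi/4$. The only nit is that your intermediate minimum should range over realizable quadruples (those arising from actual symbol differences in $\mathbb{Z}[j]$) rather than all of $\mathbb{Z}[j]^4$ --- e.g.\ $(A,B,C,D)=(1,j,0,0)$ would give modulus $2$ but is not realizable --- yet since $(1,1,0,0)$ is realizable and you use it only for the upper bound, nothing breaks.
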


\begin{proof}
Consider the CIOD whose codeword has the structure shown in \eqref{ciod}. The set of codeword difference matrices of the CIOD is a subset of the set of the codeword difference matrices of the proposed $2\times2$ code, whose codeword structure is given in \eqref{2by2code}. It is to be noted that the minimum determinant and hence the coding gain of a code depend on the codeword difference matrices of the code. In \eqref{mod}, if we let $C=D=0$, we arrive at the expression for the determinant of a codeword matrix of the CIOD. So, for the CIOD, whose codeword matrix is denoted by $\textbf{S}$, we have
\begin{equation}\label{mod1}
4\sqrt{5}det(\textbf{S}) =  (1+2j)A^2 - (1-2j)B^2.
\end{equation}
where, $A = (x_1+x_2)$ and $B = (x_1-x_2)^*$, with $x_1$ and $x_2$ taking values from $\mathbb{Z}[j]$. It is evident that the minimum of the modulus of the R.H.S of \eqref{mod1} is $4$, which occurs for $A=B=1$. So, the minimum of the absolute value of the determinant of a codeword matrix of the CIOD when the symbols take values from $\mathbb{Z}[j]$ (not all taking zero values) is $1/\sqrt{5}$. When the symbols take values from the regular QAM constellation, the minimum of the absolute value of determinant of a non-zero codeword difference matrix is $4/\sqrt{5}$ and hence, the minimum determinant for the CIOD is $16/5$. We have already shown that the minimum determinant for our $2\times$ code is $16/5$, when the symbols take values from the regular QAM. This shows that the choice of $\pi/4$ for $\theta$ in \eqref{2code} indeed maximizes the coding gain.
%??????Consider the CIOD shown in \eqref{ciod}. This is a subcode of our proposed $2\times2$ code. In the proof of Theorem \ref{nvdproof}, we showed that when the symbols $x_3$ and $x_4$ are zeros, the minimum determinant is $1/5$ ($16/5$, resp.) when the other symbols $x_1$ and $x_2$  are chosen from $\mathbb{Z}[j]$ (regular QAM, resp.). Hence the minimum determinant of the CIOD for QAM is also 16/5, which means that the choice of $\pi/4$ for $\theta$ in \eqref{2code} indeed maximizes the coding gain.
\end{proof}

\section{ The Proposed STBC For The $4\times2$ MIMO System}
\label{4by2}
In this section,  we present our STBC for the $4\times 2$ MIMO system \cite{Pav3} following the same approach that we took to design the $2\times2$ code. The design is based on the CIOD for 4 antennas, whose structure is as defined below.
\begin{definition}
 CIOD for $4$ transmit antennas \cite{ZS} is as follows:\\
\begin{equation}
\label{ciod4}
\textbf{X}(s_1,s_2,s_3,s_4) = \left[\begin{array}{cccc}
            s_{1I}+js_{3Q} & -s_{2I}+js_{4Q} & 0 & 0\\
            s_{2I}+js_{4Q} & s_{1I}-js_{3Q} & 0 & 0\\
            0 & 0 & s_{3I}+js_{1Q} & -s_{4I}+js_{2Q}\\
            0 & 0 & s_{4I}+js_{2Q} & s_{23}-js_{1Q}\\
           \end{array}\right]
\end{equation}
where $s_i, i = 1,\cdots,4$ are the information symbols as defined in the previous section. Here again, the symbols are chosen from a rotated version of the regular QAM constellation, with $\theta_g$ being the angle of rotation.
\end{definition}

The proposed STBC is obtained as follows. Our $4\times4$ code matrix, denoted by $\textbf{S}$, encodes eight symbols $x_1,\cdots,x_8$ drawn from a QAM constellation, denoted by $\mathcal{A}_q$. As before, we denote the rotated version of $\mathcal{A}_q$ by $\mathcal{A}$. Let $s_i  \triangleq e^{j\theta_g}x_i, i = 1,2, \cdots 8$, so that the symbols $s_i$ are drawn from the constellation $\mathcal{A}$. The codeword matrix is defined as
\begin{equation}\label{4by2code}
 \textbf{S}(x_1,x_2, \cdots, x_8) \triangleq \textbf{X}(s_1,s_2,s_3,s_4) + e^{j\theta}\textbf{X}(s_5,s_6,s_7,s_8)\textbf{P}
\end{equation}
\noindent with $\theta \in [0,\pi/2]$ and $\textbf{P}$ being a permutation matrix designed to make the STBC full-rate and given by
\begin{equation*}
 \textbf{P} = \left[\begin{array}{cccc}
0 & 0 & 1 & 0\\
0 & 0 & 0 & 1\\
1 & 0 & 0 & 0\\
0 & 1 & 0 & 0\\
\end{array}\right].
\end{equation*}
The choice of $\theta$ is to maximize the diversity and coding gain. Here again, we take $\theta$ to be $\pi/4$. This value of $\theta$ provides the largest coding gain achievable for this family of codes. This is so because the minimum determinant for the CIOD as defined in \eqref{ciod4} (which can also be obtained by letting the variables $s_5$, $s_6$, $s_7$ and $s_8$ be zeros in \eqref{4by2code})  is 10.24 \cite{chau} for unnormalized QAM constellations. The value of the minimum determinant for our $4\times2$ code, obtained for unnormalized 4-QAM and 16-QAM constellations is 10.24, which was checked by exhaustive search. This shows that the choice of $\pi/4$ maximizes the  coding gain. The resulting code matrix is as shown below.
\begin{eqnarray*}
 \textbf{S} = \left[\begin{array}{cccc}
            s_{1I}+js_{3Q} & -s_{2I}+js_{4Q} & e^{j\pi/4}(s_{5I}+js_{7Q}) & e^{j\pi/4}(-s_{6I}+js_{8Q})\\
            s_{2I}+js_{4Q} & s_{1I}-js_{3Q} & e^{j\pi/4}(s_{6I}+js_{8Q}) & e^{j\pi/4}(s_{5I}-js_{7Q})\\
            e^{j\pi/4}(s_{7I}+js_{5Q}) & e^{j\pi/4}(-s_{8I}+js_{6Q}) & s_{3I}+js_{1Q} & -s_{4I}+js_{2Q}\\
            e^{j\pi/4}(s_{8I}+js_{6Q}) & e^{j\pi/4}(s_{7I}-js_{5Q}) & s_{4I}+js_{2Q} & s_{3I}-js_{1Q}\\
           \end{array}\right]
\end{eqnarray*}

This code is full-rate only for the $4\times2$ MIMO system, unlike the perfect space time code \cite{ORBV}, which is full-rate for $n_r \geq 4$. Also, the generator matrix for our code can be checked to be non-unitary. So, our STBC for $4\times2$ MIMO system is not information-lossless.

\section{ Low Complexity ML-Decoding Of The $2\times2$ And $4\times2$ Codes }
\label{lowdec}
In this section, we show how our codes admit simplified ML-decoding. The information symbols are assumed to take values from QAM constellations. In the general setting, it can be shown that \eqref{Y} can be written as
\begin{equation}
 \widetilde{vec(\textbf{Y})} = \textbf{H}_{eq}\tilde{\textbf{x}} + \widetilde{vec(\textbf{N})}
\end{equation}
\noindent where $\textbf{H}_{eq} \in \mathbb{R}^{2n_rT\times2k}$ is given by
\begin{equation}
 \textbf{H}_{eq} = \left(\textbf{I}_T \otimes \check{\textbf{H}}\right)\textbf{G}
\end{equation}
with $\textbf{G} \in \mathbb{R}^{2n_tT\times 2k}$ being the generator matrix as in Definition \ref{def4}, so that $\widetilde{vec\left(\textbf{S}\right)} = \textbf{G} \tilde{\textbf{x}}.$ and
\begin{equation*}
\tilde{\textbf{x}} \triangleq [x_{1I},x_{1Q},\cdots,x_{kI},x_{kQ}]^T
\end{equation*}
\noindent with $x_i,i=1,\cdots,k$ drawn from $\mathcal{A}_q$, which is the regular QAM constellation. Using this equivalent model, the ML decoding metric can be written as
\begin{equation}
 \textbf{M}\left(\tilde{\textbf{x}}\right) = \Vert \widetilde{vec\left(\textbf{Y}\right)} - \textbf{H}_{eq}\tilde{\textbf{x}}\Vert^2
\end{equation}
On obtaining the \textbf{QR} decomposition of $\textbf{H}_{eq}$, we get $\textbf{H}_{eq} $ = $\textbf{QR}$, where  $\textbf{Q} \in \mathbb{R}^{2n_rT\times 2k}$ is an orthonormal matrix and  $\textbf{R} \in \mathbb{R}^{2k \times 2k}$ is an upper triangular matrix. The ML decoding metric now can be written as
\begin{equation}
 \textbf{M}(\tilde{\textbf{x}}) = \Vert \textbf{Q}^T\widetilde{vec\textbf{(Y)}} - \textbf{R}\tilde{\textbf{x}}\Vert^2 = \Vert \textbf{y}^\prime - \textbf{R}\tilde{\textbf{x}}\Vert^2
\end{equation}
\noindent where $\textbf{y}^\prime \triangleq [ y_1^\prime, \cdots, y_{2k}^\prime ] = \textbf{Q}^T\widetilde{vec\textbf{(Y)}}$. If $\textbf{H}_{eq} \triangleq [ \textbf{h}_1 \ \textbf{h}_2 \cdots \textbf{h}_{2k} ]$, where $\textbf{h}_i, i = 1,2,\cdots,2k$ are column vectors, then $\textbf{Q}$ and $\textbf{R}$ have the general form obtained by $Gram-Schmidt$ process as shown below
\begin{equation*}
 \textbf{Q} \triangleq [ \textbf{q}_1\ \textbf{q}_2 \ \textbf{q}_3 \cdots \textbf{q}_{2k} ]
\end{equation*}
where $\textbf{q}_i, i = 1,2,\cdots,2k$ are column vectors, and
\begin{equation*}
 \textbf{R} \triangleq \left[\begin{array}{ccccc}
\Vert \textbf{r}_1 \Vert & \langle \textbf{q}_1,\textbf{h}_2 \rangle & \langle \textbf{q}_1,\textbf{h}_3 \rangle & \ldots &  \langle \textbf{q}_1,\textbf{h}_{2k} \rangle\\
0 & \Vert \textbf{r}_2 \Vert & \langle \textbf{q}_2,\textbf{h}_3 \rangle & \ldots & \langle \textbf{q}_2,\textbf{h}_{2k} \rangle\\
0 & 0 &  \Vert \textbf{r}_3 \Vert & \ldots & \langle \textbf{q}_3,\textbf{h}_{2k} \rangle\\
\vdots & \vdots & \vdots & \ddots & \vdots\\
0 & 0 & 0 & \ldots & \Vert \textbf{r}_{2k} \Vert\\
\end{array}\right]
\end{equation*}
\noindent where $\textbf{r}_1 = \textbf{h}_1$, $\textbf{q}_1 = \frac{\textbf{r}_1}{\Vert \textbf{r}_1 \Vert}$, $\textbf{r}_i = \textbf{h}_i - \sum_{j=1}^{i-1}\langle \textbf{q}_j,\textbf{h}_i \rangle \textbf{q}_j,$ $\ \textbf{q}_i = \frac{\textbf{r}_i}{\Vert \textbf{r}_i \Vert},\ i = 2,3,\cdots,2k$.

\begin{lemma}\label{lemma1}  Let $\textbf{M} = [ \textbf{f}_1 \ \textbf{f}_2 \ \cdots \ \textbf{f}_n ][ \textbf{g}_1 \ \textbf{g}_2  \ \cdots \ \textbf{g}_n ]^T$, where $\textbf{f}_i \triangleq [f_{i1}, f_{i2},\cdots,f_{in}]$, $\textbf{g}_i \triangleq [g_{i1}, g_{i2},\cdots,g_{in}]$ $\in \mathbb{R}^{n \times 1}, i = 1,2, \cdots, n$. Then,
$tr(\textbf{M}) =  \sum_{i=1}^{n} \langle \textbf{g}_i, \textbf{f}_i \rangle$.
\end{lemma}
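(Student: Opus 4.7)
The plan is to prove this by a direct index computation, either working straight from the definition of matrix multiplication and trace, or by invoking the cyclic property of the trace. Both routes are short; I will sketch the cyclic-trace approach since it is the cleanest.

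First I would set $\textbf{F} \triangleq [\textbf{f}_1\ \textbf{f}_2\ \cdots\ \textbf{f}_n]$ and $\textbf{G} \triangleq [\textbf{g}_1\ \textbf{g}_2\ \cdots\ \textbf{g}_n]$, so that $\textbf{M} = \textbf{F}\textbf{G}^T$. Using the standard identity $tr(\textbf{F}\textbf{G}^T) = tr(\textbf{G}^T\textbf{F})$, the trace becomes $\sum_{i=1}^{n} (\textbf{G}^T\textbf{F})_{ii}$, and the $(i,i)$ entry of $\textbf{G}^T\textbf{F}$ is precisely the inner product of the $i$-th column of $\textbf{G}$ with the $i$-th column of $\textbf{F}$, namely $\langle \textbf{g}_i, \textbf{f}_i \rangle$. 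Summing over $i$ yields the claim.

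Alternatively, and equivalently, I would compute $\textbf{M}_{ij} = \sum_{k} F_{ik} G_{jk}$ from the definition, note that $F_{ik} = f_{ki}$ (the $i$-th entry of column $\textbf{f}_k$) and $G_{jk} = g_{kj}$, so that $\textbf{M}_{ii} = \sum_k f_{ki} g_{ki}$, and then swap the order of summation in $tr(\textbf{M}) = \sum_i \textbf{M}_{ii} = \sum_k \sum_i f_{ki} g_{ki} = \sum_k \langle \textbf{f}_k, \textbf{g}_k \rangle$. Since the vectors are real, $\langle \textbf{f}_k, \textbf{g}_k \rangle = \langle \textbf{g}_k, \textbf{f}_k \rangle$, giving the desired form.

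There is no real obstacle here; the only thing to be careful about is keeping the double-subscript notation (where $\textbf{g}_i$ has entries $g_{i1},\ldots,g_{in}$) consistent with the convention that $\textbf{g}_i$ is a column of $\textbf{G}$, so that $G_{kj} = g_{jk}$ and the swap of summation order produces column-wise inner products rather than row-wise. Once that bookkeeping is done, the identity follows in one or two lines.
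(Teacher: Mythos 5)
Your proposal is correct, and your second (index-swap) computation is exactly the paper's proof: the paper writes $tr(\textbf{M})=\sum_{j}\sum_{i}f_{ij}g_{ij}$ and exchanges the order of summation to get $\sum_{i}\langle\textbf{g}_i,\textbf{f}_i\rangle$. Your primary route via $tr(\textbf{F}\textbf{G}^T)=tr(\textbf{G}^T\textbf{F})$ is just a repackaging of the same bookkeeping, so there is nothing substantive to distinguish the two.
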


\begin{proof}
From the definition of the trace operation, we have
\begin{eqnarray*}
tr(\textbf{M})&  =  & \sum_{j=1}^{n} \sum_{i=1}^{n} f_{ij}g_{ij} \\
              &  =  & \sum_{i=1}^{n} \sum_{j=1}^{n} g_{ij}f_{ij} = \sum_{i=1}^{n} \langle \textbf{g}_i, \textbf{f}_i \rangle.
\end{eqnarray*}
\vspace{-0.5cm}
\end{proof}

\begin{theorem}\label{thm1} For an STBC with $k$ independent complex symbols and $2k$ weight matrices $\textbf{A}_l, l=1,2,\cdots,2k$, if, for any $i$ and $j$, $i \neq j,1 \leq i,j \leq 2k$, $\textbf{A}_i\textbf{A}_j^H + \textbf{A}_j\textbf{A}_i^H = \textbf{O}_{n_t}$, then, the $i^{th}$ and the $j^{th}$ columns of the equivalent channel matrix $\textbf{H}_{eq}$ are orthogonal.
\end{theorem}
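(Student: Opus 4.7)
My plan is to express the $i$th column of $\textbf{H}_{eq}$ in closed form as the tilde-vectorization of a simple complex matrix built from $\textbf{H}$ and $\textbf{A}_i$, then convert the real inner product of two such columns into the real part of a complex trace, and finally exploit the hypothesized anti-commutation identity to force this trace to vanish.

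First I would establish the identity
\[
\bigl(\textbf{I}_T\otimes\check{\textbf{H}}\bigr)\,\widetilde{vec(\textbf{A}_i)} \;=\; \widetilde{vec(\textbf{HA}_i)}.
\]
This is the block extension of the elementary fact that $\widetilde{\textbf{H}\textbf{a}} = \check{\textbf{H}}\tilde{\textbf{a}}$ for any complex vector $\textbf{a}$, which is precisely the defining property of the $\check{(\cdot)}$ operator introduced in the notation section; the Kronecker factor $\textbf{I}_T$ simply applies this relation independently to each of the $T$ column-blocks of $\widetilde{vec(\textbf{A}_i)}$. Since $\textbf{G}$ is the concatenation of the vectors $\widetilde{vec(\textbf{A}_l)}$ as columns, it follows that the $i$th column of $\textbf{H}_{eq}=(\textbf{I}_T\otimes\check{\textbf{H}})\textbf{G}$ equals $\widetilde{vec(\textbf{HA}_i)}$.

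Next, separating real and imaginary parts entry-wise (which is the real-valued expansion underlying Lemma \ref{lemma1}), I would obtain
\[
\bigl\langle\widetilde{vec(\textbf{HA}_i)},\,\widetilde{vec(\textbf{HA}_j)}\bigr\rangle \;=\; \mathrm{Re}\bigl(tr((\textbf{HA}_i)^H(\textbf{HA}_j))\bigr) \;=\; \mathrm{Re}\bigl(tr(\textbf{A}_i^H\textbf{H}^H\textbf{H}\textbf{A}_j)\bigr).
\]
Using $2\,\mathrm{Re}(tr(\textbf{Z}))=tr(\textbf{Z})+tr(\textbf{Z}^H)$ together with the cyclic property of the trace, this quantity can be rewritten as
\[
\tfrac{1}{2}\,tr\!\bigl(\textbf{H}^H\textbf{H}\bigl(\textbf{A}_i\textbf{A}_j^H+\textbf{A}_j\textbf{A}_i^H\bigr)\bigr).
\]

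Substituting the hypothesis $\textbf{A}_i\textbf{A}_j^H+\textbf{A}_j\textbf{A}_i^H = \textbf{O}_{n_t}$ collapses this trace to zero, proving orthogonality of the two columns. The main obstacle I anticipate is essentially bookkeeping: cleanly justifying the identification of the $i$th column of $\textbf{H}_{eq}$ with $\widetilde{vec(\textbf{HA}_i)}$ via the Kronecker structure, and verifying the standard but easily mis-stated identity relating the real inner product of tilde-vectorizations to the real part of the Hermitian trace. Once those pieces are in place, the remainder is a one-line cyclic-trace manipulation followed by invocation of the hypothesis.
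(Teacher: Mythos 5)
Your proof is correct, and while it shares with the paper the same crucial structural observation --- that the $i$th column of $\textbf{H}_{eq}=(\textbf{I}_T\otimes\check{\textbf{H}})\textbf{G}$ is exactly $\widetilde{vec(\textbf{H}\textbf{A}_i)}$, i.e.\ the stack of the blocks $\check{\textbf{H}}\tilde{\textbf{a}}_{i,m}$ --- it reaches the conclusion by a genuinely different and more streamlined route. The paper works forward from the hypothesis: it conjugates $\textbf{A}_i\textbf{A}_j^H+\textbf{A}_j\textbf{A}_i^H=\textbf{O}_{n_t}$ by $\textbf{H}$, applies the $\check{(\cdot)}$ ring homomorphism to deduce that the real matrix $\check{\textbf{H}}\check{\textbf{A}}_i(\check{\textbf{H}}\check{\textbf{A}}_j)^T$ is skew-symmetric and hence traceless, and then needs Lemma \ref{lemma1} together with the identity $\langle\tilde{\textbf{z}},\tilde{\textbf{x}}\rangle=\langle\tilde{\textbf{z}}^\prime,\tilde{\textbf{x}}^\prime\rangle$ to convert that vanishing real trace into the vanishing of $\sum_m\langle\check{\textbf{H}}\tilde{\textbf{a}}_{j,m},\check{\textbf{H}}\tilde{\textbf{a}}_{i,m}\rangle$. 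You instead work backward from the inner product, using the standard identity $\langle\widetilde{vec(\textbf{X})},\widetilde{vec(\textbf{Y})}\rangle=\mathrm{Re}\,tr(\textbf{X}^H\textbf{Y})$ and cyclicity of the trace to land directly on $\tfrac{1}{2}tr\bigl(\textbf{H}^H\textbf{H}(\textbf{A}_i\textbf{A}_j^H+\textbf{A}_j\textbf{A}_i^H)\bigr)=0$; this dispenses with the skew-symmetry argument, Lemma \ref{lemma1}, and the primed-vector identity entirely. Your version is also slightly more transparent about \emph{why} the anti-commutation condition is the right hypothesis (it is precisely what makes the Hermitian part of $\textbf{A}_j\textbf{A}_i^H$ vanish), at the cost of requiring the reader to trust the real-inner-product-to-complex-trace identity, which you rightly flag as the main bookkeeping step to verify. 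Both proofs are complete and correct; yours is shorter once that identity is in hand.
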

\begin{proof}
We note that the following identities hold for matrices $\textbf{A} \in \mathbb{C}^{m \times n},\textbf{B} \in \mathbb{C}^{m \times p}, \textbf{C} \in \mathbb{C}^{p \times n}$ and vectors $\textbf{x} \in \mathbb{C}^{p \times 1}, \textbf{z} \in \mathbb{C}^{p \times 1}$.
\vspace{-1.5 cm}
\begin{center}
\begin{equation}\label{rel1}
\textbf{A} = \textbf{B}\textbf{C}  \Leftrightarrow \check{\textbf{A}} = \check{\textbf{B}}\check{\textbf{C}}
\end{equation}
\begin{equation}\label{cda}
\langle \tilde{\textbf{z}},\tilde{\textbf{x}} \rangle  = \langle \tilde{\textbf{z}}^\prime,\tilde{\textbf{x}}^\prime \rangle
\end{equation}
\end{center}
With these identities, we proceed as follows
\begin{center}
$\textbf{A}_i\textbf{A}_j^H + \textbf{A}_j\textbf{A}_i^H = \textbf{O}_{n_t} \Leftrightarrow \textbf{H}\textbf{A}_i\textbf{A}_j^\textbf{H}\textbf{H}^\textbf{H} + \textbf{H}\textbf{A}_j\textbf{A}_i^\textbf{H}\textbf{H}^\textbf{H} = \textbf{O}_{n_r}$.
\end{center}
Applying the $\check{(.)}$ operator and using \eqref{rel1}, we get
\begin{equation}
\check{\textbf{H}}\check{\textbf{A}}_i(\check{\textbf{H}}\check{\textbf{A}}_j)^T + \check{\textbf{H}}\check{\textbf{A}}_j(\check{\textbf{H}}\check{\textbf{A}}_i)^T = \textbf{O}_{2n_r}
\end{equation}
This indicates that the real matrix $\textbf{M} \triangleq \check{\textbf{H}}\check{\textbf{A}}_i(\check{\textbf{H}}\check{\textbf{A}}_j)^T$ is a skew-symmetric matrix and hence its diagonal elements are zeros. Let $\textbf{A}_i \triangleq [\textbf{a}_{i,1} \ \textbf{a}_{i,2} \ \cdots \ \textbf{a}_{i,T}]$, where $\textbf{a}_{i,k}, k = 1,2,\cdots,T$ are the columns of $\textbf{A}_i$. Then, $\check{\textbf{A}}_i = [ \tilde{\textbf{a}}_{i,1} \ \tilde{\textbf{a}}_{i,1}^\prime \ \cdots \ \tilde{\textbf{a}}_{i,T} \ \tilde{\textbf{a}}_{i,T}^\prime ]$. Therefore,
\begin{equation*}
\textbf{M} = [ \check{\textbf{H}}\tilde{\textbf{a}}_{i,1} \ \check{\textbf{H}}\tilde{\textbf{a}}_{i,1}^{\prime} \ \cdots \ \check{\textbf{H}}\tilde{\textbf{a}}_{i,T} \ \check{\textbf{H}}{\tilde{\textbf{a}}_{i,T}}^{\prime} ][ \check{\textbf{H}}\tilde{\textbf{a}}_{j,1} \ \check{\textbf{H}}\tilde{\textbf{a}}_{j,1}^\prime \ \cdots \ \check{\textbf{H}}\tilde{\textbf{a}}_{j,T} \ \check{\textbf{H}}\tilde{\textbf{a}}_{j,T}^\prime ]^T
\end{equation*}
Since $\textbf{M}$ is real and skew-symmetric, $tr(\textbf{M}) =  0$. So,
\begin{eqnarray}
 \label{abc}
 \sum_{m =1}^{T}\{\langle \check{\textbf{H}}\tilde{\textbf{a}}_{j,m}, \check{\textbf{H}}\tilde{\textbf{a}}_{i,m} \rangle + \langle \check{\textbf{H}}\tilde{\textbf{a}}_{j,m}^\prime, \check{\textbf{H}}\tilde{\textbf{a}}_{i,m}^\prime \rangle \} & = & 0 \\ \label{bca}
 \Leftrightarrow 2\sum_{m =1}^{T}\langle \check{\textbf{H}}\tilde{\textbf{a}}_{j,m}, \check{\textbf{H}}\tilde{\textbf{a}}_{i,m} \rangle & = & 0 \\
 \label{hca}
 \therefore \sum_{m =1}^{T}\langle \check{\textbf{H}}\tilde{\textbf{a}}_{j,m}, \check{\textbf{H}}\tilde{\textbf{a}}_{i,m} \rangle & = & 0
\end{eqnarray}
where, \eqref{abc} follows from Lemma \ref{lemma1} and \eqref{bca} follows from \eqref{cda}. Now,
\begin{eqnarray*}
\textbf{H}_{eq} & = & \left(\textbf{I}_T \otimes \check{\textbf{H}}\right)\textbf{G} \\
&  = &  \left(\textbf{I}_T \otimes \check{\textbf{H}}\right)\left[ \begin{array}{cccc}
            \tilde{\textbf{a}}_{1,1} & \tilde{\textbf{a}}_{2,1} & \cdots & \tilde{\textbf{a}}_{2k,1} \\
            \tilde{\textbf{a}}_{1,2} & \tilde{\textbf{a}}_{2,2} & \cdots & \tilde{\textbf{a}}_{2k,2} \\
            \vdots & \vdots & \ddots & \vdots \\
            \tilde{\textbf{a}}_{1,T} & \tilde{\textbf{a}}_{2,T} & \cdots & \tilde{\textbf{a}}_{2k,T} \\
            \end{array}\right ] \\
%\end{eqnarray*}
%\begin{equation*}
&  =&   \left[ \begin{array}{cccc}
            \check{\textbf{H}}\tilde{\textbf{a}}_{1,1} & \check{\textbf{H}}\tilde{\textbf{a}}_{2,1} & \cdots & \check{\textbf{H}}\tilde{\textbf{a}}_{2k,1} \\
            \check{\textbf{H}}\tilde{\textbf{a}}_{1,2} & \check{\textbf{H}}\tilde{\textbf{a}}_{2,2} & \cdots & \check{\textbf{H}}\tilde{\textbf{a}}_{2k,2} \\
            \vdots & \vdots & \ddots & \vdots \\
            \check{\textbf{H}}\tilde{\textbf{a}}_{1,T} & \check{\textbf{H}}\tilde{\textbf{a}}_{2,T} & \cdots & \check{\textbf{H}}\tilde{\textbf{a}}_{2k,T} \\
            \end{array}\right ] \\
\end{eqnarray*}
From the above structure, it is readily seen that for any $i,j$, $i \neq j, 1 \leq i,j \leq 2k$, if $\textbf{A}_i\textbf{A}_j^H+\textbf{A}_j\textbf{A}_i^H = \textbf{O}_{n_t}$, then the $i^{th}$ and the $j^{th}$ columns of $\textbf{H}_{eq}$ are orthogonal. This follows from \eqref{hca}.
\end{proof}

Now, let us consider the proposed STBC for $2\times2$ MIMO system. Here, $k = 4, T = 2$. It can be verified that the following holds true for $l,m \in \left\{1,2,3,4\right\}$
\begin{equation} \label{cond}
 \textbf{A}_m\textbf{A}_l^H + \textbf{A}_l\textbf{A}_m^H = \textbf{O}_{n_t}  \left\{ \begin{array}{ll}
\forall l \neq m, m+1, & \textrm{if m is odd}\\
\forall l \neq m, m-1, & \textrm{if m is even}.\\
\end{array} \right.
\end{equation}
To be precise, \eqref{cond} holds for $(i,j) \in \{ (1,3),(1,4),(2,3),(2,4) \}$. Therefore, from Theorem \ref{thm1}, $\langle \textbf{h}_1,\textbf{h}_3 \rangle = \langle \textbf{h}_1,\textbf{h}_4 \rangle = \langle \textbf{h}_2,\textbf{h}_3 \rangle = \langle \textbf{h}_2,\textbf{h}_4 \rangle = 0 $.

Using the above results in the definition of the $\textbf{R}$-matrix, it can easily be shown that $\langle \textbf{q}_1,\textbf{h}_3 \rangle = \langle \textbf{q}_1,\textbf{h}_4 \rangle = \langle \textbf{q}_2,\textbf{h}_3 \rangle = \langle \textbf{q}_2,\textbf{h}_4 \rangle = 0 $. So, the structure of the $\textbf{R}$-matrix for our $2\times2$ code is as follows.
\begin{equation}
\label{formofR}
\textbf{R} =  \left[ \begin{array}{cccccccc}
\Vert \textbf{r}_1 \Vert & \langle \textbf{q}_1,\textbf{h}_2 \rangle  & 0 & 0 & \langle \textbf{q}_1,\textbf{h}_5 \rangle  & \langle \textbf{q}_1,\textbf{h}_6 \rangle  & \langle \textbf{q}_1,\textbf{h}_7 \rangle  & \langle \textbf{q}_1,\textbf{h}_8 \rangle  \\
0 & \Vert \textbf{r}_2 \Vert & 0 & 0 & \langle \textbf{q}_2,\textbf{h}_5 \rangle  & \langle \textbf{q}_2,\textbf{h}_6 \rangle  & \langle \textbf{q}_2,\textbf{h}_7 \rangle  & \langle \textbf{q}_2,\textbf{h}_8 \rangle  \\
0 & 0 & \Vert \textbf{r}_3 \Vert  & \langle \textbf{q}_3,\textbf{h}_4 \rangle  & \langle \textbf{q}_3,\textbf{h}_5 \rangle  & \langle \textbf{q}_3,\textbf{h}_6 \rangle  & \langle \textbf{q}_3,\textbf{h}_7 \rangle  & \langle \textbf{q}_3,\textbf{h}_8 \rangle  \\
0 & 0 & 0 & \Vert \textbf{r}_4 \Vert & \langle \textbf{q}_4,\textbf{h}_5 \rangle  & \langle \textbf{q}_4,\textbf{h}_6 \rangle  & \langle \textbf{q}_4,\textbf{h}_7 \rangle  & \langle \textbf{q}_4,\textbf{h}_8 \rangle  \\
0 & 0 & 0 & 0 & \Vert \textbf{r}_5 \Vert & \langle \textbf{q}_5,\textbf{h}_6 \rangle  & \langle \textbf{q}_5,\textbf{h}_7 \rangle  & \langle \textbf{q}_5,\textbf{h}_8 \rangle  \\
0 & 0 & 0 & 0 & 0 & \Vert \textbf{r}_6 \Vert & \langle \textbf{q}_6,\textbf{h}_7 \rangle  & \langle \textbf{q}_6,\textbf{h}_8 \rangle  \\
0 & 0 & 0 & 0 & 0 & 0 & \Vert \textbf{r}_7 \Vert & \langle \textbf{q}_7,\textbf{h}_8 \rangle  \\
0 & 0 & 0 & 0 & 0 & 0 & 0 & \Vert \textbf{r}_8 \Vert \end{array} \right]
\end{equation}

The structure of the ${\textbf{R}}$-matrix enables one to achieve simplified ML-decoding. This is because once the symbols $x_3$ and $x_4$ are given, $x_1$ and $x_2$ can be decoded independently. In the ML-decoding metric, it can be observed that the real and imaginary parts of symbol $x_1$ are entangled with one another but are independent of the real and imaginary parts of $x_2$ when $x_3$ and $x_4$ are conditionally given. So, the number of metric computations required is at most $(M^2)(2M) = 2M^3$ and hence, the ML-decoding complexity is of the order of $M^3$. When the constellation employed is a square QAM so that the real and the imaginary parts of each symbol can be decoded independently, the ML-decoding complexity can be further reduced as follows. Let $\hat{\textbf{x}} \triangleq [\hat{x}_{1I}, \hat{x}_{1Q}, \cdots, \hat{x}_{4Q}]$ denote the decoded information vector. Assuming that sphere decoding is employed (sphere decoding can be employed for constellations like  square or rectangular QAM and not for any arbitrary constellation which is a finite subset of $\mathbb{Z}[j]$), the following strategy is employed -
\begin{enumerate}
 \item A 4 dimensional real SD is done to decode the symbols $x_4$ and $x_3$, and there are $M^2$ such pairs for an M-QAM constellation.
 \item Next, $x_{2Q}$ is decoded in parallel with $x_{1Q}$, and there are  $\sqrt{M}$ possibilities for each of them. Following this, $x_{1I}$ and $x_{2I}$ are decoded using \emph{hard-limiting}, as follows
     \begin{equation}
      \hat{x}_{1I} = min\Big\{max\big(2rnd\big[\frac{u_1}{2}\big], -M\big), M\Big\}
     \end{equation}
     \begin{equation}
      \hat{x}_{2I} = min\Big\{max\big(2rnd\big[\frac{u_2}{2}\big], -M\big), M\Big\}
     \end{equation}
     where,
     \begin{equation*}
     u_1 \triangleq \big(y_1^\prime - r_{(1,2)}\hat{x}_{1Q} - \sum_{i=3}^{4}(r_{(1,2i-1)}\hat{x}_{iI}+ r_{(1,2i)}\hat{x}_{iQ})\big)/r_{(1,1)}
     \end{equation*}
     \begin{equation*}
     u_2 \triangleq \big(y_3^\prime - r_{(3,4)}\hat{x}_{2Q} - \sum_{i=3}^{4}(r_{(3,2i-1)}\hat{x}_{iI}+ r_{(3,2i)}\hat{x}_{iQ})\big)/r_{(3,3)}
     \end{equation*}
     \noindent where, for simplicity, we have denoted the $(i,j)^{th}$ entry of the $\textbf{R}$-matrix by $r_{(i,j)}$.
\end{enumerate}
So, the ML-decoding complexity of our code for square QAM is of the order of $M^2\sqrt{M}$. If, however, the QAM constellation used is not a square QAM, and cannot be represented as the Cartesian product of two PAM constellations (like the 32-QAM constellation, the optimum representation of which is as shown in Figure \ref{fig_non_square}), then the method described above cannot be employed. So, in such a scenario, the ML-decoding complexity becomes $M^3$, because one requires to decode wholly the complex symbols $x_1$ and $x_2$, when $x_3$ and $x_4$ are given.

Now, let us consider the proposed STBC for $4\times2$ MIMO system. For this case, $k = 8, T = 4$. It can be verified that the condition in \eqref{cond} holds true for $l,m \in \left\{1,2,\cdots,8\right\}$. Hence, from Theorem \ref{thm1}, for $l,m \in \left\{1,2,\cdots,8\right\},$ we have
\begin{equation*}
 \langle \textbf{h}_l, \textbf{h}_m \rangle = 0   \left\{ \begin{array}{ll}
\forall l \neq m, m+1, & \textrm{if m is odd}\\
\forall l \neq m, m-1, & \textrm{if m is even}.\\
\end{array} \right.
\end{equation*}
Using the above result, it can be easily be verified that for $l,m \in \{1,2,\cdots,8\}, l < m$,
\begin{equation*}
\langle \textbf{q}_l, \textbf{h}_m \rangle = 0   \left\{ \begin{array}{ll}
\forall l \neq m, m+1, & \textrm{if m is odd}\\
\forall l \neq m, m-1, & \textrm{if m is even}.\\
\end{array} \right.
\end{equation*}
For simplicity, let us define the $\textbf{R}$ matrix as follows
\begin{equation*}
\textbf{R} \triangleq \left[ \begin{array}{cc}
                \textbf{R}_1 & \textbf{R}_2 \\
                \textbf{O}_{8} & \textbf{R}_3 \\
                \end{array}\right]
\end{equation*}
\noindent where, $\textbf{R}_1, \textbf{R}_2$ and $\textbf{R}_3 \in \mathbb{R}^{8\times8}$, then, $\textbf{R}_1$ can be seen to have the following structure

\begin{equation}
\label{formofR1} \left[ \begin{array}{cccccccc}
\Vert \textbf{r}_1 \Vert & \langle \textbf{q}_1,\textbf{h}_2 \rangle  & 0 & 0 & 0 & 0 & 0 & 0 \\
0 & \Vert \textbf{r}_2 \Vert & 0 & 0 & 0 & 0 & 0 & 0 \\
0 & 0 & \Vert \textbf{r}_3 \Vert  & \langle \textbf{q}_3,\textbf{h}_4 \rangle  & 0 & 0 & 0 & 0 \\
0 & 0 & 0 & \Vert \textbf{r}_4 \Vert & 0 & 0 & 0 & 0 \\
0 & 0 & 0 & 0 & \Vert \textbf{r}_5 \Vert & \langle \textbf{q}_5,\textbf{h}_6 \rangle  & 0 & 0 \\
0 & 0 & 0 & 0 & 0 & \Vert \textbf{r}_6 \Vert & 0 & 0  \\
0 & 0 & 0 & 0 & 0 & 0 & \Vert \textbf{r}_7 \Vert & \langle \textbf{q}_7,\textbf{h}_8 \rangle  \\
0 & 0 & 0 & 0 & 0 & 0 & 0 & \Vert \textbf{r}_8 \Vert \end{array} \right].
\end{equation}
\noindent
The above structure of the matrix $\textbf{R}$ allows our code to achieve simplified ML-decoding as follows - having fixed the symbols $x_5,x_6,x_7$ and $x_8$, the symbols  $x_1,x_2,x_3$ and $x_4$ can be decoded independently. In the decoding metric, it can be observed that the real and imaginary parts of symbol $x_1$ are entangled with one another but are independent of the real and imaginary parts of $x_2$, $x_3$ and $x_4$ when $x_5,x_6,x_7$ and $x_8$ are conditionally given. Similarly, $x_2$, $x_3$ and $x_4$ are decoupled from one another although their own real and imaginary parts are coupled with one another. So, in general, the ML-decoding complexity of our code is of the order of $M^5$. That is due to the fact that jointly decoding the symbols $x_5,x_6,x_7$ and $x_8$ followed by independently decoding $x_1,x_2,x_3$ and $x_4$ in parallel requires a total of $(M^4)(4M)=4M^5$ metric computations. However, when square QAM is employed, the ML-decoding complexity can be further reduced as follows. Let $\hat{\textbf{x}} \triangleq [\hat{x}_{1I}, \hat{x}_{1Q}, \cdots, \hat{x}_{8Q}]$ denote the decoded information vector. Assuming the use of a sphere decoder,
\begin{enumerate}
 \item an 8 dimensional real SD is done to decode the symbols $x_5,x_6,x_7$ and $x_8$.
 \item Next, $x_{1Q}$, $x_{2Q}$, $x_{3Q}$ and $x_{4Q}$ are decoded in parallel. Following this, $x_{1I}$, $x_{2I}$ $x_{3I}$ and $x_{4I}$ are decoded using \emph{hard limiting} as follows
     \begin{equation}
      \hat{x}_{1I} = min\Big\{max\big(2rnd\big[\frac{u_1}{2}\big], -M\big), M\Big\}
     \end{equation}
     \begin{equation}
      \hat{x}_{2I} = min\Big\{max\big(2rnd\big[\frac{u_2}{2}\big], -M\big), M\Big\}
     \end{equation}
     \begin{equation}
      \hat{x}_{3I} = min\Big\{max\big(2rnd\big[\frac{u_3}{2}\big], -M\big), M\Big\}
     \end{equation}
     \begin{equation}
      \hat{x}_{4I} = min\Big\{max\big(2rnd\big[\frac{u_4}{2}\big], -M\big), M\Big\}
     \end{equation}
     \end{enumerate}
     where,
     \begin{equation*}
     u_1 \triangleq \big(y_1^\prime - r_{(1,2)}\hat{x}_{1Q} - \sum_{i=5}^{8}(r_{(1,2i-1)}\hat{x}_{iI}+ r_{(1,2i)}\hat{x}_{iQ})\big)/r_{(1,1)}
     \end{equation*}
     \begin{equation*}
     u_2 \triangleq \big(y_3^\prime - r_{(3,4)}\hat{x}_{2Q} - \sum_{i=5}^{8}(r_{(3,2i-1)}\hat{x}_{iI}+ r_{(3,2i)}\hat{x}_{iQ})\big)/r_{(3,3)}
     \end{equation*}
     \begin{equation*}
     u_3 \triangleq \big(y_5^\prime - r_{(5,6)}\hat{x}_{3Q} - \sum_{i=5}^{8}(r_{(5,2i-1)}\hat{x}_{iI}+ r_{(5,2i)}\hat{x}_{iQ})\big)/r_{(5,5)}
     \end{equation*}
     \begin{equation*}
     u_4 \triangleq \big(y_7^\prime - r_{(7,8)}\hat{x}_{4Q} - \sum_{i=5}^{8}(r_{(7,2i-1)}\hat{x}_{iI}+ r_{(7,2i)}\hat{x}_{iQ})\big)/r_{(7,7)}
     \end{equation*}
     \noindent where, $r_{(i,j)}$ denotes the $(i,j)^{th}$ entry of the $\textbf{R}$-matrix.
So, in all, we need to make a maximum of $4M^4\sqrt{M}$ metric computations only. Hence, for square QAM constellations, the ML-decoding complexity of our code is of the order of $M^4\sqrt{M}$.

\section{Comparison of Ml-decoding complexity of our codes with  known $2\times2$ and $4\times2$ STBCs}
\label{mlexist}

The ML-decoding complexity of our $2\times2$ code was shown in the previous section to be of the order of $M^3$. This was due solely to the behavior of the weight matrices which resulted in the $\textbf{R}$-matrix structure as in \eqref{formofR} for our $2\times2$ code. For any code, the weight matrices entirely define the ML-decoding complexity. For eg., all the weight matrices of the Alamouti code satisfy the condition in Theorem \ref{thm1}, and hence, the equivalent channel matrix $\textbf{H}_{eq}$ is orthogonal. So, the $\textbf{R}$-matrix for the Alamouti code is a diagonal matrix and this results in an ML-decoding complexity of the order of $M$ for general constellations. In the special case of the constellation being a square QAM constellation, the real and imaginary parts of each symbol can be further decoded independently using hard-limiting and the decoding complexity of the Alamouti code for square $M$-QAM constellations is constant. For the Golden code, the ML-decoding complexity has been considered to be of the order of $M^4$ in the literature \cite{PGA}, \cite{SS},\cite{BHV}. However, the ML-decoding complexity of the Golden code can be reduced to the order of $M^2\sqrt{M}$ for square QAM constellations. It can be easily verified, by studying the weight matrices and using Theorem \ref{thm1}, that the Golden code has the following $\textbf{R}$-matrix structure:-
 \begin{equation*}
 \textbf{R}_{Golden~code} =  \left[ \begin{array}{rrrrrrrr}
         a & 0 & a & 0 & a & a & a & a\\
         0 & a & 0 & a & a & a & a & a\\
         0 & 0 & a & 0 & a & a & a & a\\
         0 & 0 & 0 & a & a & a & a & a\\
         0 & 0 & 0 & 0 & a & a & a & a\\
         0 & 0 & 0 & 0 & 0 & a & a & a\\
         0 & 0 & 0 & 0 & 0 & 0 & a & a\\
         0 & 0 & 0 & 0 & 0 & 0 & 0 & a\\
        \end{array}\right]
        \end{equation*}
        where '$a$' denotes a possible non-zero entry. This structure makes the ML-decoding complexity of the Golden code evident. In general, the ML-decoding complexity is of the order of $M^4$. However, when square $M$-QAM is employed, the following decoding strategy can be employed, assuming that a sphere decoder is used.
        \begin{enumerate}
        \item A 4-dimensional real SD is done to decode the symbols $x_4$ and $x_3$.
        \item Next, $x_{2I}$ and $x_{2Q}$ are decoded in parallel. Following this, $x_{1I}$ and $x_{1Q}$ are decoded as follows
            \begin{equation}
      \hat{x}_{1I} = min\Big\{max\big(2rnd\big[\frac{u_1}{2}\big], -M\big), M\Big\}
     \end{equation}
     \begin{equation}
      \hat{x}_{1Q} = min\Big\{max\big(2rnd\big[\frac{u_2}{2}\big], -M\big), M\Big\}
     \end{equation}
     where,
     \begin{equation*}
     u_1 \triangleq \big(y_1^\prime - r_{(1,3)}\hat{x}_{2I} - \sum_{i=3}^{4}(r_{(1,2i-1)}\hat{x}_{iI}+ r_{(1,2i)}\hat{x}_{iQ})\big)/r_{(1,1)}
     \end{equation*}
     \begin{equation*}
     u_2 \triangleq \big(y_2^\prime - r_{(2,4)}\hat{x}_{2Q} - \sum_{i=3}^{4}(r_{(3,2i-1)}\hat{x}_{iI}+ r_{(3,2i)}\hat{x}_{iQ})\big)/r_{(2,2)}
     \end{equation*}
     \noindent where, as usual, we have denoted the $(i,j)^{th}$ entry of the $\textbf{R}$-matrix by $r_{(i,j)}$ and $\hat{\textbf{x}} \triangleq $ $ [\hat{x}_{1I},\hat{x}_{1Q},\cdots,\hat{x}_{4Q}]$ denotes the decoded information vector.
    \end{enumerate}
         So, the ML-decoding complexity is of the order of $M^2\sqrt{M}$, the same as that for our $2\times2$ code. However, for non-rectangular QAM constellations, the Golden code does not admit simplified ML-decoding. The codes presented in \cite{DV}, \cite{JJK} and \cite{YW} also have their $\textbf{R}$-matrix structures identical to that of the Golden code and hence offer the same ML-decoding complexity.

Considering the HTW-PGA code, the $\textbf{R}$-matrix structure is observed to be as follows:-
\begin{equation*}
\textbf{R}_{HTW-PGA} =  \left[ \begin{array}{rrrrrrrr}
         a & 0 & 0 & 0 & a & a & a & a\\
         0 & a & 0 & 0 & a & a & a & a\\
         0 & 0 & a & 0 & a & a & a & a\\
         0 & 0 & 0 & a & a & a & a & a\\
         0 & 0 & 0 & 0 & a & 0 & 0 & 0\\
         0 & 0 & 0 & 0 & 0 & a & 0 & 0\\
         0 & 0 & 0 & 0 & 0 & 0 & a & 0\\
         0 & 0 & 0 & 0 & 0 & 0 & 0 & a\\
        \end{array}\right]
\end{equation*}
where '$a$' again denotes a possible non-zero entry. From this structure, the order of the ML-decoding complexity can be easily calculated for the different QAM  constellation types. For square $M$-QAM, it is of the order of  $M^2$. This follows from the fact that when the symbols $x_3$ and $x_4$ are fixed, $x_{1I}$, $x_{1Q}$, $x_{2I}$ and $x_{2Q}$ can be decoded independently from one another and each of them can be decoded by using hard-limiting, hence requiring a total of only 4$M^2$ computations. For non-rectangular QAM constellations, the ML-decoding complexity is of the order of $2M^3$. The Sezginer-Sari code also has a similar ML-decoding complexity. The above observations are all captured in Tables \ref{table_11}. In the table, the ML-decoding complexity given for each code is the maximum number of metric computations needed.

The ML-decoding complexity of our $4\times2$ code was shown to be of the order of $M^5$ for general constellations, and $M^4\sqrt{M}$ for square QAM constellations. This simplified complexity was facilitated by the structure of the $\textbf{R}$-matrix, a part of which had the structure as in \eqref{formofR1}. The ML decoding complexity of the DjABBA code is of the order of $M^7$ in general, and of the order of $M^6$ for square $M$-QAM. To the  best of our knowledge, this hasn't been mentioned in literature. To see this, one has to look at the $\textbf{R}$-matrix structure for the DjABBA code which, as mentioned before, is dictated by the weight matrices for the code. The structure of the $\textbf{R}_1$-matrix for the DjABBA code, one corresponding to \eqref{formofR1}, is as follows
\begin{equation*}
 \textbf{R}_{1,DjABBA} = \left[ \begin{array}{rrrrrrrr}
         a & 0 & 0 & 0 & a & a & a & a\\
         0 & a & 0 & 0 & a & a & a & a\\
         0 & 0 & a & 0 & a & a & a & a\\
         0 & 0 & 0 & a & a & a & a & a\\
         0 & 0 & 0 & 0 & a & a & a & a\\
         0 & 0 & 0 & 0 & 0 & a & a & a\\
         0 & 0 & 0 & 0 & 0 & 0 & a & a\\
         0 & 0 & 0 & 0 & 0 & 0 & 0 & a\\
        \end{array}\right]
\end{equation*}
where '$a$' corresponds to a possible non-zero entry. For square $M$-QAM, it is evident that $x_{1I}$, $x_{1Q}$, $x_{2I}$ and $x_{2Q}$ can be decoded independently from one another, by using hard-limiting, when the symbols $x_3$, $x_4$, $x_5$, $x_6$, $x_7$ and $x_8$ are fixed. This allows an ML-decoding complexity of the order of $M^6$, with $M^6$ metric computations for decoding the other 6 symbols. This scheme can be employed only for square QAM constellations, so that the real and the imaginary parts can be decoded independently. However, for non-rectangular QAM constellations, one must decode $x_1$ and $x_2$ independently, when the rest of the symbols are given. So, the ML-decoding complexity is of the order of $M^7$.

The BHV code, which was designed primarily for simplified ML-decoding complexity, has a complexity of the order of $M^6$ in general and of the order of $M^4\sqrt{M}$ specifically for square $M$-QAM (Incidentally, the authors of \cite{BHV} haven't claimed this!). This follows from the structure of the  $\textbf{R}_1$-matrix as shown in \eqref{rbhv}, with $a$ denoting a possible non-zero entry.

For square $M$-QAM, the following strategy can be employed to decode the symbols.
 \begin{enumerate}
 \item An 8-dimensional real SD is employed to decode the symbols $x_5$, $x_6$, $x_7$ and $x_8$.
 \item Following this, $x_{3I}$, $x_{3Q}$, $x_{4I}$ and $x_{4Q}$ are decoded in parallel. Next, $x_{1I}$, $x_{1Q}$, $x_{2I}$ and $x_{2Q}$ are decoded by employing hard-limiting.
 \end{enumerate}
 Hence, the ML-decoding complexity of the BHV code is of the order of $M^4\sqrt{M}$, because a maximum of $4M^4\sqrt{M}$ metric computations need to be done in minimizing the ML-decoding metric. But for non-rectangular QAM constellations, the pairs $(x_1,x_3)$ and $(x_2,x_4)$ have to be decoded in parallel after jointly decoding the last four symbols, thus accounting for an ML-decoding complexity of the order of $M^6$. Table \ref{table22} captures the ML-decoding complexities for the three codes for the different classes of QAM constellations.
\begin{equation}\label{rbhv}
 \textbf{R}_{1,BHV} = \left[ \begin{array}{rrrrrrrr}
         a & 0 & 0 & 0 & a & 0 & 0 & 0\\
         0 & a & 0 & 0 & 0 & a & 0 & 0\\
         0 & 0 & a & 0 & 0 & 0 & a & 0\\
         0 & 0 & 0 & a & 0 & 0 & 0 & a\\
         0 & 0 & 0 & 0 & a & 0 & 0 & 0\\
         0 & 0 & 0 & 0 & 0 & a & 0 & 0\\
         0 & 0 & 0 & 0 & 0 & 0 & a & 0\\
         0 & 0 & 0 & 0 & 0 & 0 & 0 & a\\
        \end{array}\right]
\end{equation}

\section{Simulation Results}
\label{dis}
In all the simulation scenarios in this section,  we consider quasi-static Rayleigh flat fading channels and the plots are shown for the codeword error rate (CER) as a function of the  SNR at each receive antenna.
\subsection{$2\times2$ MIMO}
 Figure \ref{4qam} shows the CER performances of our $2\times2$ code, the Golden code and the HTW-PGA code, with all the codes employing the 4 QAM constellation.  Figure \ref{16qam} shows the CER plots for the three codes, with the constellation used being 16 QAM. In both the plots, we see that the CER curve for our $2\times2$ code is indistinguishable from that of the Golden code and this is due to the identical coding gains of the two codes. The HTW-PGA code has a slightly worse performance because of its lower coding gain. Table \ref{table1} gives a comparison between the minimum determinants of some well known $2\times2$ codes. It is to be noted that in obtaining the minimum determinants for these codes, we have ensured that the average energy per codeword is uniform across all codes, but the average energy per constellation has been allowed to increase with constellation size, or in other words, the average constellation energies haven't been normalized to unity.

\subsection{$4\times2$ MIMO}
Figure \ref{4qamnew} shows the CER performance plots for our $4\times2$ code, the well known DjABBA code \cite{HTW} and the BHV code \cite{BHV}, with all the codes using the 4-QAM constellation. Figure \ref{16qamnew} shows the CER performance for 16 QAM. Both the plots exhibit a similar trend, with our $4\times2$ code outperforming both the DjABBA code and the BHV code at high SNR, and  the DjABBA code in turn outperforming the BHV code. This can be attributed to the superior coding gain of our $4\times2$ code. The bad performance of the BHV code at a high SNR is due mainly to the fact that it does not have full-diversity. Table \ref{table22} gives a comparison between the minimum determinants of the above three codes. The minimum determinants of our $4\times2$ code for 4-QAM and 16-QAM has been calculated using exhaustive search and the constellation energy hasn't been normalized to unity. However it has been ensured that the average energy per codeword has been maintained uniform for all the three codes. The DjABBA code that we have used for our simulations is the one that has been optimized for performance, and proposed in Chapter $9$ of \cite{HTW}, . It can be seen that our code has a coding gain twice that of the DjABBA's.

\section{Concluding Remarks}
\label{conc}
In this paper, we have seen that it is possible to have full-rate codes with  simplified ML-decoding complexity without having to sacrifice performance. We presented two codes, one each for the $2\times2$ and the $4\times2$ MIMO system, both of which have lower ML-decoding complexity for general QAM constellations than the best known codes for such systems. Moreover, our $4\times2$ code outperforms the best DjABBA code while our $2\times2$ code matches the Golden code in performance. We also saw that the weight matrices play a decisive role in defining the ML-decoding complexity of an STBC and went on to show that some existing codes also offer simplified ML-decoding for square QAM constellations, something which was not known hitherto. Noting the similarity between the constructions of the $2\times2$ code and the $4\times2$ code, it is natural to see if the design procedure can be extended to $2^a$ transmit antennas, $a >2$. However, there are two main issues to be concerned about:
\begin{enumerate}
 \item For our $2\times2$ code, we showed analytically that the minimum determinant for regular QAM constellations is $3.2$. However, for our $4\times2$ code, we have checked that the minimum determinant for 4 and 16 QAM is 10.24 through exhaustive computer search. We couldn't do the same for higher constellation sizes, because such a search would run for weeks!.
     The rate of a square CIOD for $2^a$ transmit antennas is $\frac{a}{2^{a-1}}$, so that this STBC has $2a$ independent information symbols. If we were to extend our approach to $2^a$ transmit antennas, $a > 2$, the code would have $4a$ symbols and finding out the minimum determinant for 4 QAM itself would be time consuming.
 \item The ML-decoding complexity for our $2\times2$ code is of the order of $M^3$ and that for our $4\times2$ code is $M^5$, for general constellations. So, the ML-decoding complexity for the STBC designed for $2^a$ transmit antennas, $a >2$ would be of the order of $M^{2a+1}$, while the rate would be $\frac{a}{2^{a-2}}$. While there is an increase in code rate, there is also a substantial increase in ML-decoding complexity, making the attractiveness of code design using this approach for higher number of transmit antennas questionable.
\end{enumerate}

The following questions still remain unanswered. 
\begin{itemize}
\item For a $2\times2$ MIMO system, what is the minimum ML-decoding complexity achievable for a full-rate, full-diversity STBC ? Is it possible to have a full-rate, full-diversity code with an ML-decoding complexity of the order of  $M^2$ for all constellations.
\item Multi-group decodable codes \cite{sush} offer simplified ML-decoding complexity. For a given transmit antenna, what is the maximum rate that a multi-group decodable code can have ? For the $4\times2$ MIMO case, is it possible to have a full-rate, full-diversity, two-group decodable STBC, so that the ML-decoding complexity is of the order of $M^4$ ?
\end{itemize}

\section*{ACKNOWLEDGEMENTS}
This work was supported partly by the DRDO-IISc program on Advanced Research in Mathematical Engineering. We would also like to thank Mr. Shashidhar, working with Beceem Communications, for his useful discussion on the ML-decoding complexity issues.

 %%%%%%%%%%%%%%%%%%%%%%%%%%%%%%%%%%
\newpage
\begin{table*}
\begin{center}
\begin{tabular}{|c|c|} \hline
 & Min det   \\
 Code & for M QAM    \\ \hline

Tilted QAM \cite{YW} &  0.8000   \\ \hline
Dayal-Varanasi code \cite{DV} &  3.2000   \\ \hline
The Golden code \cite{BRV} &  3.2000   \\ \hline
Trace-orthonormal cyclotomic code \cite{JJK} &  3.2000  \\ \hline
Paredes-Gershman code \cite{PGA} &  2.2857  \\ \hline
Serdar-Sari code \cite{SS} & 2.0000  \\ \hline
The proposed code \cite{Pav1} &  3.2000 \\ \hline
\end{tabular}
\caption{Comparison between the minimum determinants of some well known $2\times2$ STBCs}
\label{table1}
\end{center}
\end{table*}

\begin{table*}
\begin{center}
\begin{tabular}{|c|c|c|} \hline
   & \multicolumn{2}{c|}{ ML Decoding complexity}  \\ \cline{2-3}
 Code  &  square QAM & Non-rectangular QAM  \\ \hline
Tilted QAM  & $2M^2\sqrt{M}$ &  $M^4$ \\ \hline
Dayal-Varanasi code   & $2M^2\sqrt{M}$ &  $M^4$ \\ \hline
The Golden code  & $2M^2\sqrt{M}$ &  $M^4$ \\ \hline
Trace-orthonormal cyclotomic code  & $2M^2\sqrt{M}$ &  $M^4$ \\ \hline
Paredes-Gershman code   & 4$M^2$ &  $2M^3$ \\ \hline
Serdar-Sari code    & 4$M^2$ &  $2M^3$ \\ \hline
The proposed code  & $2M^2\sqrt{M}$ & $2M^3$ \\ \hline
\end{tabular}
\caption{Comparison between the ML-decoding complexity of some well known $2\times2$ STBCs for QAM}
\label{table_11}
\end{center}
\end{table*}

\begin{table*}
\begin{center}
\begin{tabular}{|c|c|c|c|} \hline
    & Min det for &  \multicolumn{2}{c|}{ ML Decoding complexity}  \\ \cline{3-4}
 Code & 4 and 16 QAM & Square QAM &  Non-rectangular QAM  \\ \hline
DjABBA code \cite{HTW} & 0.64 & 4$M^6$ &  $2M^7$ \\ \hline
BHV code \cite{BHV} & 0 & 4$M^4\sqrt{M}$ &  $2M^6$ \\ \hline
The proposed code \cite{Pav3} & 10.24 &  $4M^4\sqrt{M} $ & $4M^5$ \\ \hline
\end{tabular}
\end{center}
\caption{Comparison between the minimum determinant and the ML-decoding complexity of  $4\times2$ STBCs for QAM constellations}
\label{table22}
\end{table*}

%%%%%%%%%%%%%%%%%%%%%%%%%%%%%%%%%%%%%%%%%%%%%%%%%%%%%%%%%%%%%%%%%%%

\begin{figure}
\centering
\includegraphics[width=5.5in,height=3.8in]{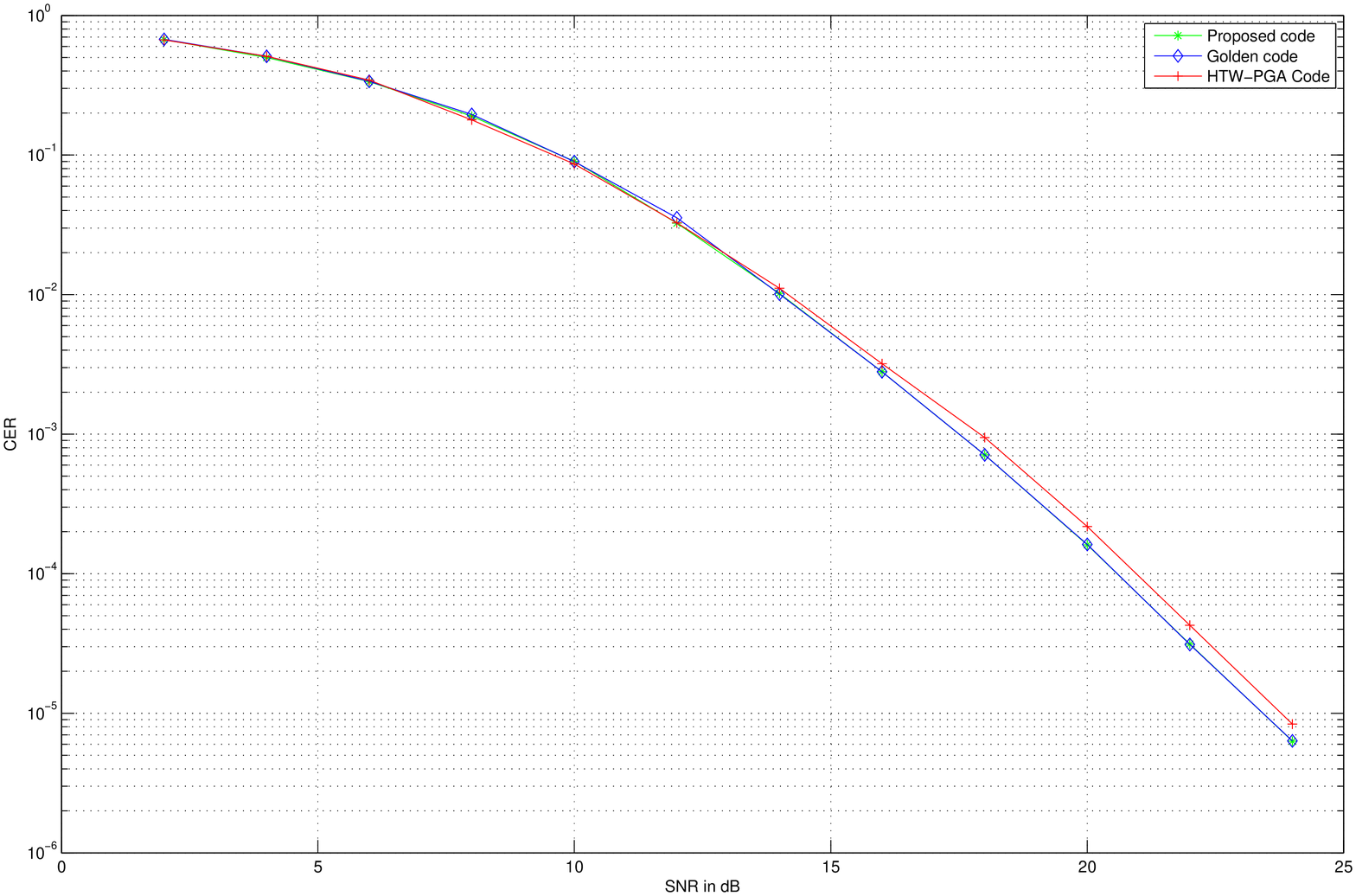}
\caption{CER performance of $2\times2$ codes for 4-QAM}
\label{4qam}
\end{figure}

\begin{figure}
\centering
\includegraphics[width=5.5in,height=3.8in]{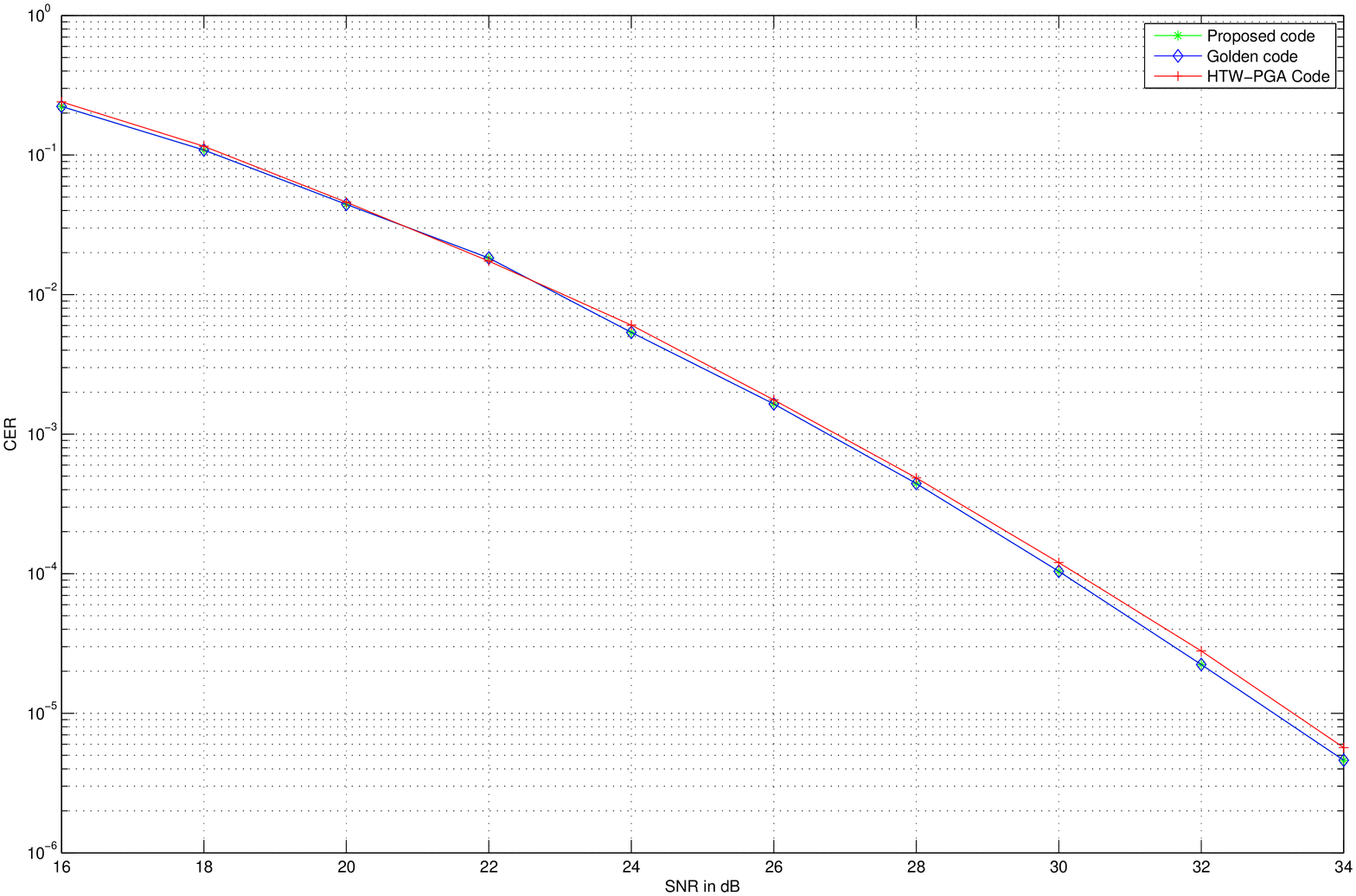}
\caption{CER performance of $2\times2$ codes for 16-QAM}
\label{16qam}
\end{figure}

\begin{figure}
\centering
\includegraphics[width=5.5in,height=3.8in]{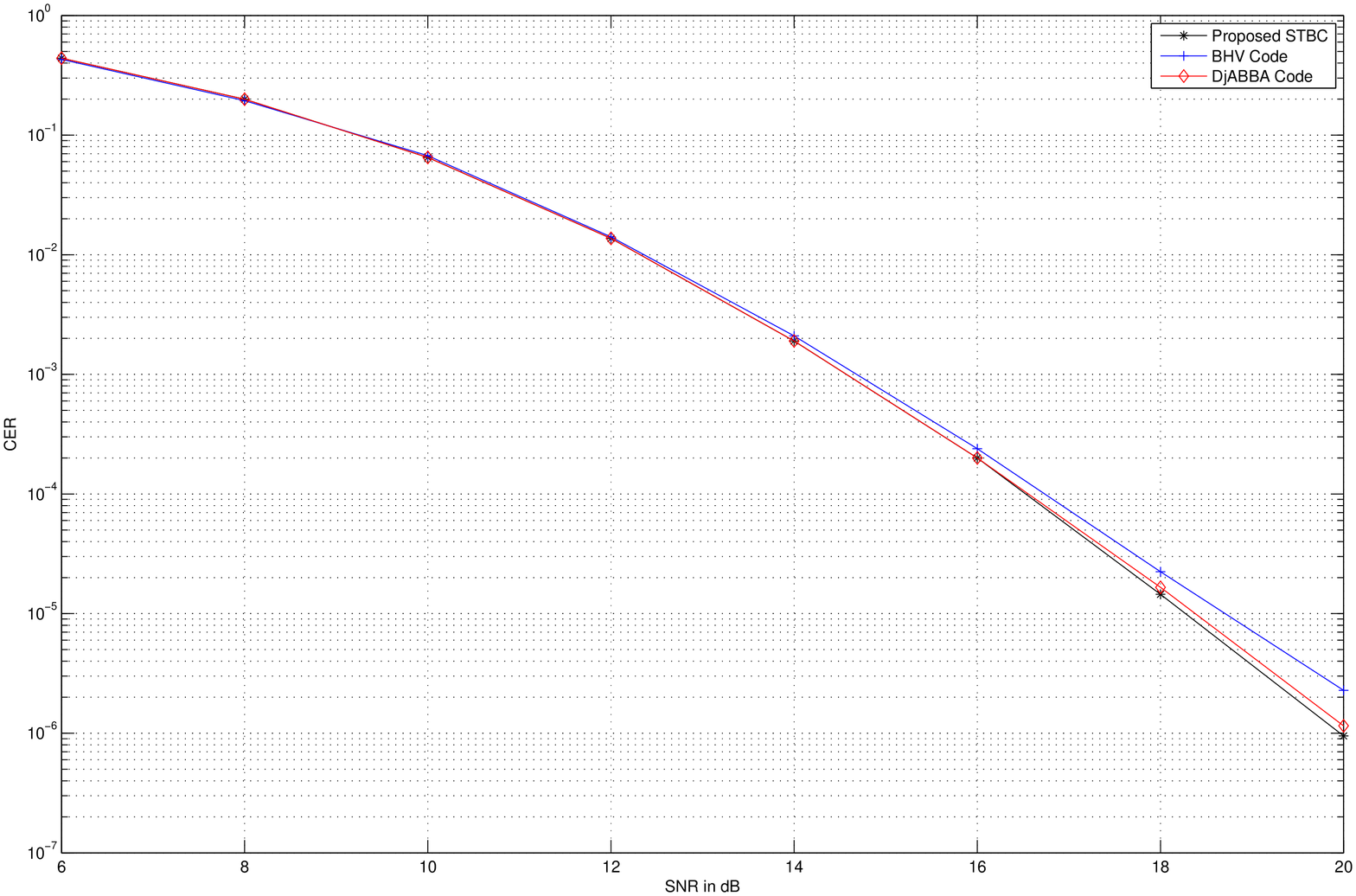}
\caption{CER performance of $4\times2$ codes for 4-QAM}
\label{4qamnew}
\end{figure}

\begin{figure}
\centering
\includegraphics[width=5.5in,height=3.8in]{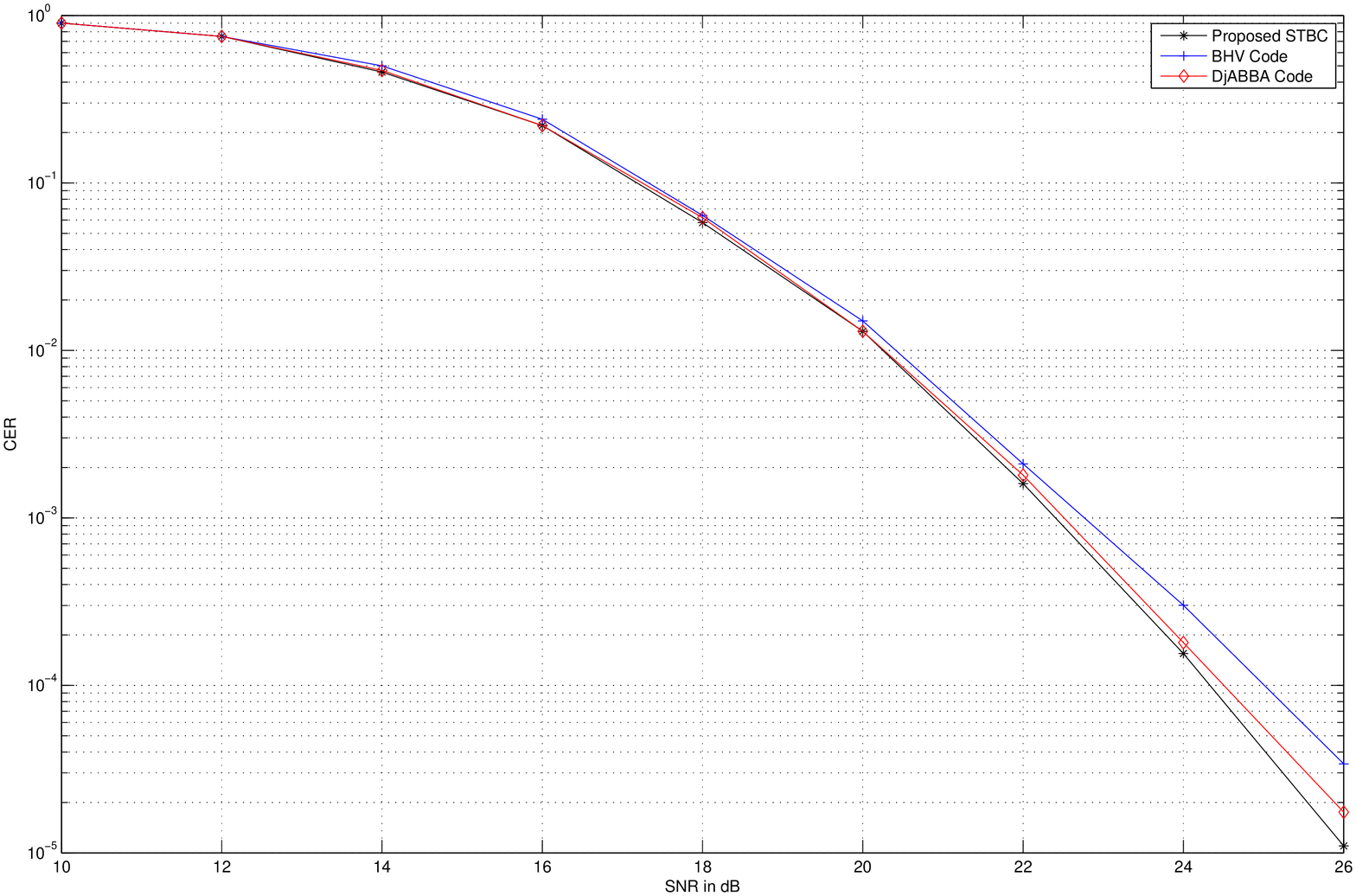}
\caption{CER performance for $4\times2$ codes for 16-QAM}
\label{16qamnew}
\end{figure}

\begin{figure}
\centering
\includegraphics[width=5.5in,height=3.8in]{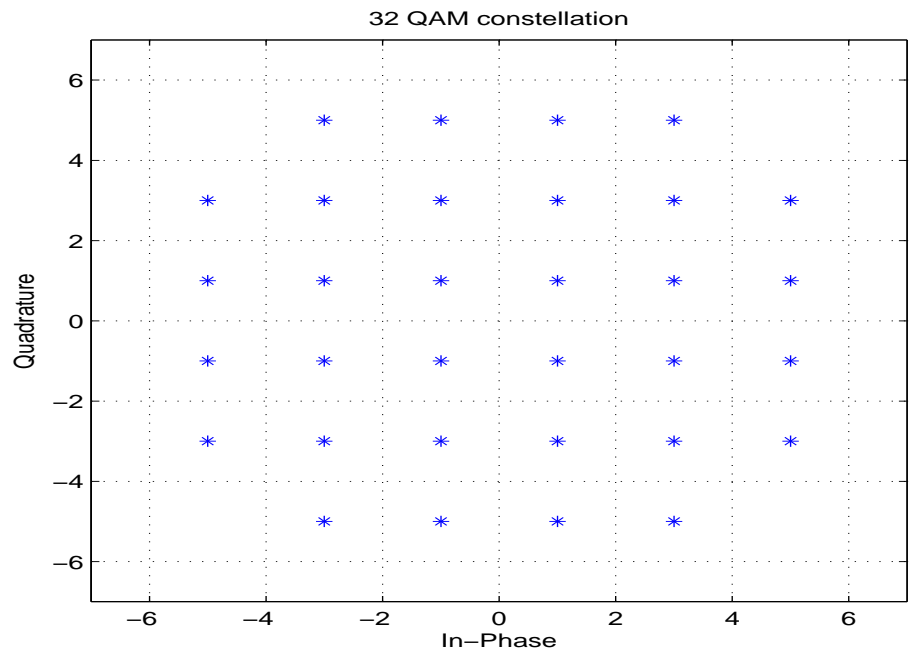}
\caption{32 QAM - An example of a non-rectangular QAM constellations}
\label{fig_non_square}
\end{figure}

\end{document}